\newlength{\dinwidth}
\newlength{\dinmargin}
\newcommand{\R}{\mathbb{R}}
\newcommand{\N}{\mathbb{N}}
\newcommand{\C}{\mathbb{C}}
\newcommand{\Z}{\mathbb{Z}}
\newcommand{\B}{\mathbb{B}}
\newcommand{\z}{\mathbf{z}}
\newcommand{\mr}{\mathbf{r}}
\newcommand{\ud}{\,\mathrm{d}}
\newcommand{\Rs}{\mathcal{R}}
\newtheorem{theorem}{Theorem}[section]
\newtheorem{proposition}{Proposition}[section]
\begin{document}

\def\theequation {\thesection.\arabic{equation}}
\makeatletter\@addtoreset {equation}{section}\makeatother

% Definition of title page:
\title[Algebro-geometric solutions to the Camassa-Holm equation]{New construction of algebro-geometric solutions to the Camassa-Holm equation and their numerical evaluation } 

\author{C.~Kalla}
\address{Institut de Math\'ematiques de Bourgogne,
		Universit\'e de Bourgogne, 9 avenue Alain Savary, 21078 Dijon
		Cedex, France}
    \email{Caroline.Kalla@u-bourgogne.fr}

\author{C.~Klein}
\address{Institut de Math\'ematiques de Bourgogne,
		Universit\'e de Bourgogne, 9 avenue Alain Savary, 21078 Dijon
		Cedex, France}
    \email{Christian.Klein@u-bourgogne.fr}

\date{\today}    % optional

\begin{abstract}
    An independent derivation of solutions to the Camassa-Holm equation in terms of multi-dimensional theta functions is
    presented using an approach based on Fay's identities.  Reality and 
    smoothness conditions are studied for these solutions from the 
    point of view of the topology of the underlying real 
    hyperelliptic surface. The solutions are studied numerically for 
    concrete examples, also in the limit where the surface 
    degenerates to the Riemann sphere, and where solitons and cuspons 
    appear.
\end{abstract}

\keywords{}

\thanks{We thank V.~Shramchenko for helpful discussions and for 
carefully reading the manuscript.
This work has been supported in part by the project FroM-PDE funded by the European
Research Council through the Advanced Investigator Grant Scheme, the Conseil R\'egional de Bourgogne
via a FABER grant, the Marie-Curie IRSES program RIMMP and the ANR via the program ANR-09-BLAN-0117-01. }

\maketitle

\section{Introduction}

The Camassa-Holm (CH) equation 
\begin{equation}
u_{t}+3\,uu_{x}=u_{xxt}+2\,u_{x}u_{xx}+uu_{xxx}-2k\,u_{x}, \label{CH intro}
\end{equation}
was first found by
Fokas and Fuchssteiner \cite{FF} with the method of
recursion operators and shown to be a bi-hamiltonian equation with
an infinite number of conserved functionals.  Camassa and Holm  
\cite{CH} showed that it appeared as a model for
unidirectional propagation of waves in shallow water, $u(x,t)$
representing the height of the free surface about a flat bottom,
$k$  being a constant related to the critical shallow water
speed. In this context, only real-valued solutions are physically meaningful.

To be able to formulate a Cauchy problem for the CH equation which 
implies the solution of 
(\ref{CH intro}) for a given function $u(x,0)$, it is convenient to 
write (\ref{CH intro}) in the non-local evolutionary form,
\begin{equation}\label{CHevo}
    u_{t}=D^{-1}(-3uu_{x}+2\,u_{x}u_{xx}+uu_{xxx}-2k\,u_{x}),
\end{equation}
where the operator $D$ is given by $D=1-\partial_{xx}$, and where its 
inverse is defined by giving certain boundary conditions. Since the 
inverse of the operator $D$ stands for an integral over the Green's 
function for given boundary conditions, and thus an integral from some 
base point $x_{0}$ to $x$, it is not a local operation in $x$. This 
non-locality has mathematically interesting consequences: the CH 
equation has traveling wave solutions of the form 
$u(x,t)=c\,\exp\{-|x-vt|\}$ ($v$ being the speed, $c=const$) called peakons that have a discontinuous first derivative at the wave peak.
Camassa and Holm \cite{CH} described the dynamics of the peakons in terms of a finite-dimensional completely integrable Hamiltonian system, namely, each peakon solution is associated with a mechanical system of moving particles. The class of mechanical systems of this type was further extended by Calogero and  Fran\c coise in \cite{Cal, CF}. Multi-peakon solutions were studied using  different approaches in a series of papers \cite{BSS1, BSS2, BSS3, Cam}. Periodic solutions of the shallow water equation were discussed in \cite{KC}. 

A further consequence of this non-locality is that  
solutions of the CH equation in terms of multi-dimensional theta 
functions do not depend explicitly on the physical coordinates. Such 
solutions were first given by Alber and Fedorov in \cite{AFY1} by 
solving a generalized Jacobi inversion problem. In contrast to the 
well known cases of Korteweg-de Vries (KdV), nonlinear Schr\"odinger and sine-Gordon
equations, see for instance \cite{BBEIM} and references therein, complex solutions of the CH equation are not
meromorphic functions of $(x, t)$ but have several branches. This is 
due to the presence of an implicit function $y(x,t)$ of the variables 
$x$ and $t$ in the argument of the theta function appearing in the 
solutions. A monodromy effect is thus present in the profile of 
real-valued solutions such as cusps and peakons. This means 
that even bounded solutions to the CH equation can have discontinuous 
or infinite derivatives in contrast to KdV solutions. 
Algebro-geometric solutions of the Camassa-Holm equation and their 
properties are studied in \cite{ACFYHM2, AFY1, AFY2,GH1,GH2,GH3}.

Our goal in this paper
is to give an independent derivation of such solutions based on identities between multi-dimensional theta functions, which naturally arise from Fay's identity \cite{Fay}.
This identity states that, for any points $a,b,c,d$ on a compact 
Riemann surface of genus $g>0$, and for any $\z\in\C^{g}$, there 
exist scalars $\gamma_{1},\gamma_{2},\gamma_{3},$ depending on the points $a,b,c,d$, such that
\begin{equation}
\gamma_{1}\,\Theta(\z+\textstyle\int^{a}_{c})\,\Theta(\z+\textstyle\int^{d}_{b})
+\gamma_{2}\,\Theta(\z+\textstyle\int^{a}_{b})\,\Theta(\z+\textstyle\int^{d}_{c})
=\gamma_{3}\,\Theta(\z)\,\Theta(\z+\textstyle\int^{a}_{c}+\textstyle\int^{d}_{b}), \label{Fay intro}
\end{equation}
where $\Theta$ is the multi-dimensional theta function (\ref{theta}); here and below we use the notation $\int^{b}_{a}$ for the Abel map (\ref{abel}) between a and b. 
While the authors in \cite{ACFYHM2, AFY1, AFY2} used  generalized theta
functions and generalized Jacobians (going back to investigations of Clebsch and
Gordan \cite{CG}), we derive the solutions from the identity 
(\ref{Fay intro}). This fits into 
the program formulated by Mumford \cite{Mum} that all algebro-geometric 
solutions to integrable equations should be obtained from Fay's 
identity and suitable degenerations thereof. Historically this 
approach was only able to reproduce solutions already obtained 
via so-called Baker-Akhiezer functions, generalizations of the 
exponential function to Riemann surfaces. The first example of new solutions 
found via the Fay identity was by one of the authors \cite{Kalla} for 
the multi-component nonlinear Schr\"odinger equations, see also 
\cite{CK}. Both methods have specific advantages: for the 
Baker-Akhiezer approach, solutions to the associated linear system 
the integrability condition of which is the studied equation have to 
be constructed on a Riemann surface for a given singularity 
structure. For the Mumford approach the non-trivial task is the 
finding of a suitable degeneration of the Fay identity for the 
studied equation. Once this is done the identification of certain constants in the solutions 
as well as the study of reality ans smoothness conditions is then in 
general more straight forward than in the Baker-Akhiezer approach. We provide here the 
first example for an integrable equation with nonlocal terms in the 
evolutionary form (\ref{CHevo}) as explained there.
The spectral data for the theta-functional solutions to CH consist of a hyperelliptic curve of the form
 $\mu^{2}=\prod_{j=1}^{2g+2} (\lambda-\lambda_{j})$ with three marked 
 points: two of them are interchanged under the involution 
 $\sigma(\lambda,\mu)=(\lambda,-\mu)$, and the third is a ramification 
 point $(\lambda_{j_{0}},0)$.

Our construction of real valued solutions is based on the description 
of the real and imaginary part of the Jacobian associated to a real hyperelliptic curve (i.e., the branch points $\lambda_{j}$ are real or  pairwise conjugate non-real).
In this way, one gets purely 
transcendental conditions on the parameters (i.e., without reference to a divisor 
defined by the solution of a Jacobi inversion problem), such that the solutions are 
real-valued and smooth. It turns out that continuous real valued 
solutions are either smooth or have an 
infinite number of cusp-type singularities. 

Concrete examples for the resulting solutions are studied numerically by using the code for 
real hyperelliptic surfaces \cite{cam,lmp}. This code uses so-called spectral 
methods to compute periods on the surfaces. It allows also to study 
numerically almost degenerate surfaces where the branch points 
collapse pairwise. In this limit, the theta functions break down to 
elementary functions, and the solutions describe solitons or cusps. 
It is noteworthy that the theta-functional solutions thus contain as 
limiting cases all known solutions to the CH equation. 
The quality of the numerics is ensured by testing the identities 
between theta functions which are used to construct the CH solutions 
in this paper. In addition, the solutions are computed on a 
grid and are numerically differentiated. These independent tests 
ensure that the shown solutions are correct to much better than 
plotting accuracy.

The paper is organized as follows: in section 2 we summarize 
important facts on Riemann surfaces, especially Fay's identities for 
theta functions and results on real surfaces. In section 3 we use 
Fay's identities to rederive theta-functional solutions to the CH 
equation, and give reality and smoothness conditions. In section 4 we 
study numerically concrete examples, also in almost degenerate 
situations. We add some concluding remarks in section 5.

\section{Theta functions and real Riemann surfaces}

In this section we recall basic facts on Riemann surfaces, in particular real surfaces 
and multi-dimensional theta functions defined on them.

\subsection{Theta functions}
Let $\Rs _{g}$ be a 
compact Riemann surface of genus $g>0$. Denote by $(\mathcal{A},\mathcal{B}):= (\mathcal{A}_{1},\ldots,\mathcal{A}_{g},\mathcal{B}_{1},\ldots,\mathcal{B}_{g})$ a canonical homology basis, and by $(\omega_{1},\ldots,\omega_{g})$ the basis of holomorphic differentials normalized via
\begin{equation}
\int_{\mathcal{A}_{k}}\omega_{j}=2\mathrm{i}\pi\delta_{kj}, \quad k,j=1,\ldots,g. \label{norm hol diff}
\end{equation}
The matrix $\B=\left(\int_{\mathcal{B}_{k}}\omega_{j}\right)$ of 
$\mathcal{B}$-periods of the normalized holomorphic differentials $\omega_{j}$, $j=1,\ldots,g$, is symmetric and has a negative definite real part. The theta function with (half integer) characteristics $\delta=[\delta_{1},\delta_{2}]$ is defined by
\begin{equation}
\Theta_{\B}[\delta](\z)=\sum_{\mathbf{m}\in\Z^{g}}\exp\left\{\tfrac{1}{2}\langle \B(\mathbf{m}+\delta_{1}),\mathbf{m}+\delta_{1}\rangle+\langle \mathbf{m}+\delta_{1},\z+2\mathrm{i}\pi\delta_{2}\rangle\right\},\label{theta}
\end{equation}
for any $\z\in\C^{g}$; here $\delta_{1},\delta_{2}\in \left\{0,\frac{1}{2}\right\}^{g}$ 
are the vectors of the characteristics $\delta$; $\langle.,.\rangle$ denotes the 
scalar product $\left\langle \mathbf{u},\mathbf{v} \right\rangle=\sum_{i}u_{i}\,v_{i}$ for any $\mathbf{u},\mathbf{v}\in\C^{g}$. The theta function $\Theta[\delta](\z)$ is even if 
the characteristics $\delta$ is even, i.e., $4\left\langle 
\delta_{1},\delta_{2} \right\rangle$ is even, and odd if the 
characteristics 
$\delta$ is odd i.e., $4\left\langle \delta_{1},\delta_{2} 
\right\rangle$ is odd. An even characteristics is called non-singular if 
$\Theta[\delta](0)\neq 0$, and an odd characteristics is called non-singular if the gradient $\nabla\Theta[\delta](0)$ is non-zero. The theta function with characteristics is related to 
the theta function with zero characteristics (the Riemann theta 
function denoted by $\Theta$) as 
follows
\begin{equation}
\Theta[\delta](\z)=\Theta(\z+2\mathrm{i}\pi\delta_{2}+\B\delta_{1})\,\exp\left\{\tfrac{1}{2}\langle \B\delta_{1},\delta_{1}\rangle+\langle\z+2\mathrm{i}\pi\delta_{2},\delta_{1}\rangle\right\}.\label{2.3}
\end{equation}

Denote by $\Lambda$ the lattice $\Lambda=\{2\mathrm{i}\pi \mathbf{N}+\B 
\mathbf{M}, \,\,\mathbf{N},\mathbf{M}\in\Z^{g}\}$ generated by the 
$\mathcal{A}$ and $\mathcal{B}$-periods of the normalized holomorphic differentials $\omega_{j}$, $j=1,\ldots,g$. The complex torus $J:=J(\Rs_{g})=\C^{g} / \Lambda$ is called the Jacobian of the Riemann surface $\Rs_{g}$. The theta function (\ref{theta}) has the following quasi-periodicity property with respect to the lattice $\Lambda$:
\begin{equation}
\Theta[\delta](\mathbf{z}+2\mathrm{i}\pi \mathbf{N}+\B \mathbf{M}) \nonumber
\end{equation}
\begin{equation}
=\Theta[\delta](\z)  \exp\left\{-\tfrac{1}{2}\langle \B\mathbf{M},\mathbf{M}\rangle-\langle \z,\mathbf{M}\rangle+2\mathrm{i}\pi(\langle\delta_{1},\mathbf{N}\rangle-\langle\delta_{2},\mathbf{M}\rangle)\right\}.\label{per theta}
\end{equation}

Denote by $\Pi$ the Abel map $\Pi:\Rs_{g}\longmapsto J$ defined by 
\begin{equation}
\Pi(p)=\int_{p_{0}}^{p}\omega,  \label{abel}
\end{equation} 
for any $p\in\Rs_{g}$, where $p_{0}\in\Rs_{g}$ is the base point of the application, and where $\omega=(\omega_{1},\ldots,\omega_{g})^{t}$ is the vector of the normalized holomorphic differentials. In the whole paper we use the notation $\int_{a}^{b}=\Pi(b)-\Pi(a)$.

Now let $k_{a}$ denote a local parameter near $a\in\Rs_{g}$ and consider 
the following expansion of the normalized holomorphic differentials $\omega_{j}$, $j=1,\ldots,g$,
\begin{equation} 
\omega_{j}(p)= \left(V_{a,j}+W_{a,j}\,k_{a}(p)+\ldots\right)\,\ud k_{a}(p), \label{exp hol diff}
\end{equation}
for any point $p\in\Rs_{g}$ in a neighborhood of $a$, where $V_{a,j},\,W_{a,j}\in\C$.
Let us denote by $D_{a}$ the operator of directional derivative along the vector $\mathbf{V}_{a}=(V_{a,1},\ldots,V_{a,g})^{t}$:
\begin{equation} 
D_{a}F(\z)=\sum_{j=1}^{g}\partial_{z_{j}}F(\z) V_{a,j}, \label{Da}
\end{equation}
where $F:\C^{g}\longrightarrow \C$ is an arbitrary function.
According to \cite{Mum}, the theta function satisfies the following identities derived from Fay's identity (\ref{Fay intro}):
\begin{align}
D_{b}\ln\frac{\Theta(\z+\int^{a}_{c})}{\Theta(\z)}&=\,p_{1}+p_{2}\,\frac{\Theta(\z+\int^{a}_{b})\,\Theta(\z+\int^{b}_{c})}{\Theta(\z+\int^{a}_{c})\,\Theta(\z)}\,, \label{cor Fay1}\\
D_{a}D_{b}\ln\Theta(\z)&=q_{1}\,+\,q_{2}\,\frac{\Theta(\z+\int^{b}_{a})\,\Theta(\z-\int^{b}_{a})}{\Theta(\z)^{2}}\,,
\label{cor Fay2}
\end{align}
for any $\z\in\C^{g}$ and any distinct points $a,b,c\in\Rs_{g}$; here the scalars $p_{i},q_{i}$, $i=1,2$ depend on the points $a,b,c$ and are given by
\begin{align}
p_{1}(a,b,c)&=-D_{b}\ln\frac{\Theta[\delta](\int^{b}_{a})}{\Theta[\delta](\int^{b}_{c})}\,, \label{p1}\\  p_{2}(a,b,c)&=\frac{\Theta[\delta](\int^{a}_{c})}{\Theta[\delta](\int^{a}_{b})\Theta[\delta](\int^{c}_{b})}\,D_{b}\Theta[\delta](0)\,,\label{p2}\\
q_{1}(a,b)&=D_{a}D_{b}\ln\Theta[\delta](\textstyle\int^{b}_{a}), \label{q1}\\
q_{2}(a,b)&=\frac{D_{a}\,\Theta[\delta](0)\,D_{b}\,\Theta[\delta](0)}{\Theta[\delta](\int^{b}_{a})^{2}},\label{q2}
\end{align}
where $\delta$ is a non-singular odd characteristics.

\subsection{Real Riemann surfaces}

A Riemann surface Rg is called real if it admits an antiholomorphic involution, denoted by $\tau$.
An anti-holomorphic involution $\tau :\mathcal{R}_{g} \to 
\mathcal{R}_{g}$  satisfies $\tau^2 = id$ and acts on the
local parameter as the complex conjugation. The 
connected components of the set of
fixed points of the anti-involution $\tau$ are called real ovals of $\tau$. We denote 
by $\mathcal{R}_{g}(\R)$ the set of fixed points. 
According to Harnack's inequality \cite{Harnack}, the number $\chi$ of real ovals of a real Riemann surface of
genus $g$ cannot exceed $g+1$: $0\leq \chi\leq g+1$. 
Curves with the maximal number $\chi=g+1$ of real ovals are called M-curves.

The complement $\mathcal{R}_{g}\setminus 
\mathcal{R}_{g}(\R)$ has either one or two connected components. The 
curve $\mathcal{R}_{g}$ is called a \textit{dividing} curve  if 
$\mathcal{R}_{g}\setminus\mathcal{R}_{g}(\R)$ has two components, and 
$\mathcal{R}_{g}$ is called \textit{non-dividing} if 
$\mathcal{R}_{g}\setminus \mathcal{R}_{g}(\R)$ is connected (notice 
that an M-curve is always a dividing curve). In this paper we only 
consider hyperelliptic curves which are discussed  in detail 
section 4.

Let $(\mathbf{\mathcal{A}},\mathbf{\mathcal{B}})$ be a basis of the homology group $H_{1}(\Rs_{g})$. According to Proposition 2.2 in Vinnikov's paper \cite{Vin} (see also \cite{Gross}), there exists a canonical homology basis such that
\begin{equation}
\left(\begin{matrix}
\tau \mathbf{\mathcal{A}}\\
\tau \mathbf{\mathcal{B}}
\end{matrix}\right)
=
\left(\begin{matrix}
\mathbb{I}_{g}&0\\
\mathbb{H}\,\,&-\mathbb{I}_{g}\,\,
\end{matrix}\right)
\left(\begin{matrix}
\mathbf{\mathcal{A}}\\
\mathbf{\mathcal{B}}
\end{matrix}\right), \label{hom basis}
\end{equation}
where $\mathbb{I}_{g}$ is the $g\times g$ unit matrix, and $\mathbb{H}$ is a block diagonal $g\times g$ matrix defined as follows:
\\\\
1) if $\mathcal{R}_{g}(\R)\neq \emptyset$, 
\[\mathbb{H}={\left(\begin{matrix}
0&1&&&&&&\\
1&0&&&&&&\\
&&\ddots&&&&&\\
&&&0&1&&&\\
&&&1&0&&&\\
&&&&&0&&\\
&&&&&&\ddots&\\
&&&&&&&0
\end{matrix}\right)} \quad \text{if $\mathcal{R}_{g}$ is dividing},\]
\[\mathbb{H}={\left(\begin{matrix}
1&&&&&\\
&\ddots&&&&\\
&&1&&&\\
&&&0&&\\
&&&&\ddots&\\
&&&&&0
\end{matrix}\right)} \quad \text{if $\mathcal{R}_{g}$ is non-dividing};\]
rank$(\mathbb{H})=g+1-\chi$ in both cases.
\\\\
2) if $\mathcal{R}_{g}(\R)= \emptyset$, (i.e. the curve does not have 
real ovals), then
\[\mathbb{H}={\left(\begin{matrix}
0&1&&&\\
1&0&&&\\
&&\ddots&&\\
&&&0&1\\
&&&1&0
\end{matrix}\right)}
\quad \text{or} \quad 
\mathbb{H}={\left(\begin{matrix}
0&1&&&&\\
1&0&&&&\\
&&\ddots&&&\\
&&&0&1&\\
&&&1&0&\\
&&&&&0
\end{matrix}\right)};\]
rank$(\mathbb{H})=g$ if $g$ is even, rank$(\mathbb{H})=g-1$ if $g$ is odd.
\\

In what follows we choose a canonical homology basis in $H_{1}(\Rs_{g})$ satisfying (\ref{hom basis}) and 
take $a,b\in\Rs_{g}$ such that $\tau a=b$. Denote by $\ell$ a contour connecting the points $a$
and $b$ which does not intersect the canonical homology basis.
Then the action of $\tau$ on the generators 
$(\mathbf{\mathcal{A}},\mathbf{\mathcal{B}},\ell)$ of the relative homology group $H_{1}(\Rs_{g},\{a,b\})$ is given by  (see \cite{Kalla} for more details)
\begin{equation}
\left(\begin{matrix}
\tau \mathbf{\mathcal{A}}\\
\tau \mathbf{\mathcal{B}}\\
\tau \ell
\end{matrix}\right)
=
\left(\begin{array}{ccc}
\mathbb{I}_{g}&0&0\\
\mathbb{H}\,\,&-\mathbb{I}_{g}\,\,&0\\
\mathbf{N}^{t}&0&-1\,\,\,\,
\end{array}\hspace{-0.3cm}\right)
\left(\begin{matrix}
\mathbf{\mathcal{A}}\\
\mathbf{\mathcal{B}} \\
\ell
\end{matrix}\right), \label{hom basis 1}
\end{equation}
for some $\mathbf{N}\in\Z^{g}$. In the case where $\tau a=a$ and 
$\tau b=b$,  the action of $\tau$ on the generators 
$(\mathbf{\mathcal{A}},\mathbf{\mathcal{B}},\ell)$ of the relative 
homology group $H_{1}(\Rs_{g},\{a,b\})$ reads  (see \cite{Kalla} for more details)
\begin{equation}
\left(\begin{matrix}
\tau \mathbf{\mathcal{A}}\\
\tau \mathbf{\mathcal{B}}\\
\tau \ell
\end{matrix}\right)
=
\left(\begin{array}{ccc}
\mathbb{I}_{g}&0&0\\
\mathbb{H}\,\,&-\mathbb{I}_{g}\,\,&0\\
\mathbf{N}^{t}&\,\,\mathbf{M}^{t}&1
\end{array}\right)
\left(\begin{matrix}
\mathbf{\mathcal{A}}\\
\mathbf{\mathcal{B}} \\
\ell
\end{matrix}\right), \label{hom basis 3}
\end{equation}
where the vectors $\mathbf{N},\,\mathbf{M}\in\Z^{g}$ are related by
\begin{equation}
2\,\mathbf{N}+\mathbb{H}\mathbf{M}=0. \label{NM stable}
\end{equation}

Now let us study the action of $\tau$ on Abelian differentials and the action of the complex conjugation on the theta function with zero characteristics. 
Denote by $\tau^{*}$ the action of $\tau$ lifted to the space of differentials: $\tau^* \omega(p) = \omega(\tau p)$ for any $p\in \mathcal{R}_{g}$. By (\ref{hom basis}) the $\mathcal{A}$-cycles of the homology basis are invariant under $\tau$.  Due to the normalization conditions (\ref{norm hol diff}), this leads to the following action of $\tau$ on the normalized holomorphic differentials: 
\begin{equation}
\overline{\tau^{*}\omega_{j}}=-\,\omega_{j}. \label{diff hol}
\end{equation} 

Let $a,b\in\Rs_{g}$ and denote by $\left\{\mathbf{\mathcal{A}},\mathbf{\mathcal{B}},\mathcal{S}_{b}\right\}$ the generators of the homology group $H_{1}(\Rs_{g}\setminus\{a,b\})$ of the punctured Riemann surface $\Rs_{g}\setminus\{a,b\}$, where $\mathcal{S}_{b}$ is a positively oriented small contour around $b$ such that $\mathcal{S}_{b}\circ\ell=1$.  
It was proved in \cite{Kalla} that in the case where $\tau a = b$, 
the $\mathcal{A}$-cycles in the homology group 
$H_{1}(\Rs_{g}\setminus\{a,b\})$ are stable under $\tau$.
Therefore, by the uniqueness of the normalized differential of the  third kind $\Omega_{b-a}$ which has residue $1$ at $b$ and residue $-1$ at $a$, we get 
\begin{equation}
\overline{\tau^{*}\Omega_{b-a}}=-\,\Omega_{ b - a}. \label{diff 3kind}
\end{equation}
In the case where $\tau a = a$ and $\tau b = b$, Proposition A.2 in \cite{Kalla} shows that the action of $\tau$ on the $\mathcal{A}$-cycles in the homology group $H_{1}(\Rs_{g}\setminus\{a,b\})$ is given by 
\begin{equation}
\tau \mathcal{A}= \mathcal{A}-\mathbf{M} \mathcal{S}_{b},  \label{tau A}
\end{equation}
where $\mathbf{M}$ is defined in (\ref{hom basis 3}). Therefore, by the uniqueness of the differential $\Omega_{b-a}$, we deduce that
\begin{equation}
\overline{\tau^{*}\Omega_{b-a}}=\Omega_{b-a}+\mathbf{M}^{t}\omega, \label{diff 3kind real}
\end{equation}
where $\omega$ denotes the vector of normalized holomorphic differentials. 
From (\ref{hom basis}) and (\ref{diff hol}) we obtain the following reality property for the Riemann matrix $\mathbb{B}$:
\begin{equation}
\overline{\mathbb{B}}= \mathbb{B}-2\mathrm{i}\pi\, \mathbb{H}. \label{matrix B}
\end{equation}
Moreover, according to Proposition 2.3 in \cite{Vin}, for any $\mathbf{z}\in\C^{g}$, relation (\ref{matrix B}) implies
\begin{equation} 
\overline{\Theta(\mathbf{z})}=\kappa\,\Theta(\overline{\mathbf{z}}-\mathrm{i}\pi \,\text{diag}(\mathbb{H})), \label{conj theta}
\end{equation}
where $\text{diag}(\mathbb{H})$ denotes the vector of diagonal 
elements of the matrix $\mathbb{H}$, and $\kappa$ is a root of unity 
which depends on the matrix $\mathbb{H}$ (knowledge of the exact value of $\kappa$ is not needed for our purpose).

\subsection{Action of $\tau$ on the Jacobian and the theta divisor of real Riemann surfaces}

In this subsection, we review known results about the theta divisor of real Riemann surfaces (see \cite{Vin,DN}). Let us choose a canonical homology basis satisfying (\ref{hom basis}) and consider the Jacobian $J:=J(\Rs_{g})$ of a real Riemann surface $\Rs_{g}$.

The anti-holomorphic involution $\tau$ on $\Rs_{g}$ gives rise 
to an anti-holomorphic involution on the Jacobian: 
if $\mathcal{D}:=\mathcal{D}_{2}-\mathcal{D}_{1}$ with $\mathcal{D}_{1}$ and $\mathcal{D}_{2}$ positive divisors on $\Rs_{g}$ (recall that a positive divisor is defined by a finite formal sum of points $\sum_{i} n_{i}\, a_{i}$ with $a_{i}\in\Rs_{g}$ and $n_{i}\in\N$)
then $\tau\,\mathcal{D}$ is the class of the point $(\int_{\tau \mathcal{D}_{1}}^{\tau 
\mathcal{D}_{2}}\omega)=(\int_{\mathcal{D}_{1}}^{\mathcal{D}_{2}}\tau^{*}\omega)$ in the Jacobian. Therefore,  by (\ref{diff hol}) $\tau$ lifts to the anti-holomorphic involution on $J$, denoted also by $\tau$, given by 
\begin{equation}
\tau \zeta=-\overline{\zeta},   \label{3.7}
\end{equation}
for any $\zeta\in J$.

Now consider the following two subsets of the Jacobian
\begin{align}
S_{1}=\{\zeta\in J;\, \zeta+\tau\,\zeta=\mathrm{i}\pi \,\text{diag}(\mathbb{H})\},  \label{S1}\\
S_{2}=\{\zeta\in J;\, \zeta-\tau\,\zeta=\mathrm{i}\pi \,\text{diag}(\mathbb{H})\}, \label{S2}
\end{align}
where the matrix $\mathbb{H}$ was introduced in (\ref{hom basis}).
Below we study their intersections $S_{1}\cap (\Theta)$ and 
$S_{2}\cap (\Theta)$ with the theta divisor $(\Theta)$, the set of zeros of the theta function. 
Let us introduce the notation: the vectors $e_{i}$, $i=1,\ldots,g$ 
with components $e_{ik}=\delta_{ik}$, $\mathbb{B}_{i}=\mathbb{B}\,e_{i}$. 

It is a straightforward computation to prove that the set $S_{1}$ is the disjoint 
union of the tori $T_{v}$ defined by 
\begin{multline}
T_{v}=\{\zeta\in J;\, \zeta=\mathrm{i}\pi\,(\text{diag}(\mathbb{H})/2+v_{1}\,e_{r+1}+\ldots+v_{g-r}\,e_{g}) 
+\beta_{1}\,\text{Re}(\B_{1})+\ldots+\beta_{g}\,\text{Re}(\B_{g})\,, 
\\
\beta_{1},\ldots,\beta_{r}\in\R/2\Z\,,\,\beta_{r+1},\ldots,\beta_{g}\in\R/\Z\}, \label{3.10}
\end{multline}
where $v=(v_{1},\ldots,v_{g-r})\in(\Z/2\Z)^{g-r}$ and $r$ is the rank of the matrix $\mathbb{H}$.
Therefore, the description of the set $S_{1}\cap (\Theta)$ reduces to the study of the sets $T_{v}\cap (\Theta)$.
In the case where $\mathcal{R}_{g}(\R)\neq \emptyset$ and $\mathcal{R}_{g}$ is 
non-dividing, one can see that for all $v$ the torus $T_{v}$ contains a half-period 
corresponding to an odd half-integer characteristics, which yields 
$T_{v}\cap (\Theta)\neq\emptyset$. The same holds for all $v\neq 0$ 
in the case where the curve is dividing or does not have real ovals. 
The following result proved in \cite{Vin} provides a complete description of the sets $T_{v}\cap (\Theta)$ in the case where the curve admits real ovals:

\begin{proposition} 
If $\mathcal{R}_{g}(\R)\neq\emptyset$, then $T_{v}\cap (\Theta)=\emptyset$ if and only if the curve is dividing and $v=0$.  \label{prop div S1}
\end{proposition}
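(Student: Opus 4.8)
The plan is to translate the vanishing condition $T_v\cap(\Theta)=\emptyset$ into a statement about half-integer characteristics and their parity. Recall that $\Theta$ vanishes at every point of $J$ of the form $2\mathrm{i}\pi\delta_2+\mathbb{B}\delta_1$ with $\delta=[\delta_1,\delta_2]$ an \emph{odd} characteristics, since the odd theta function vanishes at the origin by the formula (\ref{2.3}); conversely a torus $T_v$ that is a real subtorus through a half-period meets $(\Theta)$ as soon as it contains such an odd half-period. So my first step would be to parametrize the points of $T_v$ and isolate, among them, the half-periods, i.e.\ the points of $J$ of order two. Using the explicit description (\ref{3.10}), the real translates $\beta_k\,\text{Re}(\mathbb{B}_k)$ together with the shift $\mathrm{i}\pi(\text{diag}(\mathbb{H})/2+v_1 e_{r+1}+\cdots)$ should be rewritten, via the reality relation (\ref{matrix B}) $\overline{\mathbb{B}}=\mathbb{B}-2\mathrm{i}\pi\mathbb{H}$, in the standard basis $\{2\mathrm{i}\pi e_j,\ \mathbb{B} e_j\}$ of half-periods, thereby reading off the characteristic $\delta$ attached to each half-period lying on $T_v$.

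The second step is the parity count. Once a half-period on $T_v$ is written as $\mathbb{B}\delta_1+2\mathrm{i}\pi\delta_2$, its parity is governed by $4\langle\delta_1,\delta_2\rangle \bmod 2$, and the presence of the diagonal-of-$\mathbb{H}$ shift together with the nonintegral real periods $\text{Re}(\mathbb{B}_k)=\tfrac12(\mathbb{B}_k+\overline{\mathbb{B}_k})=\mathbb{B}_k-\mathrm{i}\pi\mathbb{H}_k$ is exactly what feeds $\mathbb{H}$ into the characteristics. I would show that for a dividing curve the block structure of $\mathbb{H}$ (zeros on the diagonal, off-diagonal $2\times2$ antidiagonal blocks) forces, in the case $v=0$, every half-period sitting on $T_0$ to carry an \emph{even} nonsingular characteristic, so none of them lies on the theta divisor; whereas for any $v\neq0$, or for a nondividing curve (where $\mathbb{H}$ now has $1$'s on the diagonal and $\text{diag}(\mathbb{H})$ is nonzero), the shift produces at least one \emph{odd} characteristic, hence a zero of $\Theta$, giving $T_v\cap(\Theta)\neq\emptyset$. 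This reproduces the claim already asserted in the text just above the proposition for $v\neq0$ and for the nondividing case, and pins the only possible emptiness to the pair (dividing, $v=0$).

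The converse direction—that for a dividing curve with $v=0$ the torus $T_0$ genuinely avoids all of $(\Theta)$, not merely the half-periods—is where I expect the real work to lie, since $T_0$ is a positive-dimensional real subtorus and a priori $\Theta$ could vanish somewhere on it away from any two-torsion point. Here I would invoke the Riemann vanishing theorem and the theory of the real theta divisor: on a dividing (separating) curve the restriction of $\Theta$ to the connected real component corresponding to $v=0$ is, up to the nonvanishing exponential factor in (\ref{conj theta}), a real-valued function of fixed sign. Concretely, (\ref{conj theta}) gives $\overline{\Theta(\zeta)}=\kappa\,\Theta(\overline\zeta-\mathrm{i}\pi\,\text{diag}(\mathbb{H}))$, and on $T_0$ one has $\overline\zeta-\mathrm{i}\pi\,\text{diag}(\mathbb{H})\equiv\zeta$ modulo the lattice by the defining relation $\zeta+\tau\zeta=\mathrm{i}\pi\,\text{diag}(\mathbb{H})$ of $S_1$, so $\Theta|_{T_0}$ is real up to the constant $\kappa$ and the quasiperiodicity factor of (\ref{per theta}). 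A connectedness/continuity argument (the real locus of a separating curve disconnects the complex curve, which forces the theta function to keep a constant sign on this component, the classical fact behind Proposition~2.2 of \cite{Vin}) then yields that $\Theta$ cannot change sign, hence cannot vanish, on $T_0$. The delicate point is to control the sign factor $\kappa$ and the quasiperiodicity carefully enough to conclude nonvanishing rather than merely real-valuedness; I would lean on the cited result of Vinnikov \cite{Vin} to supply exactly this sign-definiteness on the $v=0$ component of a dividing curve.
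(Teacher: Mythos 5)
The paper does not actually prove Proposition \ref{prop div S1}: it is imported from Vinnikov \cite{Vin}, and the only argument offered in the text is the remark immediately preceding the statement, namely that for a non-dividing curve with real ovals (and for any $v\neq 0$ otherwise) the torus $T_{v}$ contains an odd half-period and therefore meets $(\Theta)$. Your first two steps reproduce exactly that remark in more detail --- rewriting points of $T_{v}$ in the basis $\{2\mathrm{i}\pi e_{j},\mathbb{B}e_{j}\}$ via $\mathrm{Re}(\mathbb{B}_{k})=\mathbb{B}_{k}-\mathrm{i}\pi\mathbb{H}_{k}$ and reading off the parity of the resulting characteristics --- so that part of your plan is consistent with the paper's (informal) treatment of the ``nonempty'' direction.

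The gap is in your third step, and you have half-diagnosed it yourself. Real-valuedness of $\Theta$ on the connected torus $T_{0}$ (up to the constant $\kappa$ and quasi-periodicity factors), which does follow from (\ref{conj theta}), gives a continuous real function on a connected set; such a function has constant sign \emph{only if it is already known not to vanish}, which is precisely the conclusion you want. ``The real locus disconnects the curve, hence $\Theta$ keeps a constant sign'' supplies no mechanism converting the topological dividing property into nonvanishing: the naive term-by-term positivity of the theta series works only when $\mathbb{H}=0$, i.e.\ for M-curves, where $\mathrm{Re}(\mathbb{B})=\mathbb{B}$ and every summand of $\Theta(\mathbb{B}\beta)$ is positive, while for a general dividing curve the phases $\exp\{-\mathrm{i}\pi\langle \mathbf{m},\mathbb{H}\beta\rangle\}$ destroy positivity. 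The actual proof in \cite{Vin} (see also \cite{DN}) goes through the Riemann vanishing theorem: a zero of $\Theta$ on $T_{0}$ corresponds to a positive divisor of degree $g-1$ whose behaviour under $\tau$ forces a parity count on the real ovals that is incompatible with the curve being dividing. Since you end by leaning on Vinnikov to supply ``exactly this sign-definiteness,'' your proposal, like the paper, ultimately rests on the citation; that is defensible here only because the paper itself treats the proposition as a quoted result rather than proving it.
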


\noindent
In other words, among all curves which admit real ovals, the only torus $T_v$ which does not intersect the theta divisor is the  torus $T_{0}$ corresponding to dividing curves. 
This torus is given by 
\begin{equation}
T_{0}=\{\zeta\in J; \,\zeta=\beta_{1}\,\text{Re}(\B_{1})+\ldots+\beta_{g}\,\text{Re}(\B_{g}),
\,\beta_{1},\ldots,\beta_{r}\in\R/2\Z\,,\,\beta_{r+1},\ldots,\beta_{g}\in\R/\Z \}.  \label{3.11}
\end{equation}

Analogously, it can be checked that the set $S_{2}$ is the disjoint union of the tori 
$\tilde{T}_{v}$ defined by
\begin{multline}
\tilde{T}_{v}=\{\zeta\in J\,;\, \zeta=2\mathrm{i}\pi\left(\alpha_{1}\,e_{1}+\ldots+\alpha_{g}\,e_{g}\right)+(v_{1}/2)\,\B_{r+1}+\ldots+(v_{g-r}/2)\,\B_{g},\\ \label{3.12}
\alpha_{1},\ldots,\alpha_{g}\in\R/\Z\}, 
\end{multline}
where $v=(v_{1},\ldots,v_{g-r})\in(\Z/2\Z)^{g-r}$ and $r$ is the rank of the matrix $\mathbb{H}$. Description of the sets $\tilde{T}_{v}\cap (\Theta)$ in the case where the curve admits real ovals was given in \cite{DN}:

\begin{proposition}
If $\mathcal{R}_{g}(\R)\neq\emptyset$, then $\tilde{T}_{v}\cap (\Theta)=\emptyset$ if and only if the curve is an M-curve and $v=0$.  \label{prop div S2}
\end{proposition}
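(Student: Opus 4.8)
The plan is to prove the equivalence by treating its two directions separately, working throughout with the decomposition of $S_{2}$ into the tori $\tilde{T}_{v}$ of (\ref{3.12}), so that it suffices to analyse a single $\tilde{T}_{v}$. The basic mechanism is the conjugation formula (\ref{conj theta}): on $S_{2}$ we have $\zeta-\tau\zeta=\zeta+\overline{\zeta}=\mathrm{i}\pi\,\text{diag}(\mathbb{H})$ modulo the lattice by (\ref{S2}) and (\ref{3.7}), so $\overline{\zeta}=-\zeta+\mathrm{i}\pi\,\text{diag}(\mathbb{H})$, and substituting into (\ref{conj theta}) together with the evenness of $\Theta$ gives $\overline{\Theta(\zeta)}=\kappa\,\Theta(\zeta)$. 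Hence $|\kappa|=1$ and the restriction of $\Theta$ to each $\tilde{T}_{v}$ is, up to a fixed nonzero phase, real-valued; thus $\tilde{T}_{v}\cap(\Theta)=\emptyset$ is equivalent to this real restriction being of constant sign, and the problem reduces to deciding when it vanishes.

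For the first implication I would show that, unless we are in the case $v=0$ and M-curve, the torus $\tilde{T}_{v}$ contains a half-period whose characteristic is odd; by the relation (\ref{2.3}) between $\Theta$ and $\Theta[\delta]$ the function $\Theta$ vanishes there, forcing $\tilde{T}_{v}\cap(\Theta)\neq\emptyset$. For $v\neq0$ this is an explicit computation: restricting the parameters $\alpha_{j}$ in (\ref{3.12}) to $\{0,\tfrac{1}{2}\}$ yields the half-period with $\delta_{2}=\tfrac{1}{2}\varepsilon$ and $\delta_{1}=\tfrac{1}{2}v^{\mathrm{ext}}$, where $v^{\mathrm{ext}}$ places the entries of $v$ in the positions $r+1,\dots,g$; since $v\neq0$ one may pick $\varepsilon\in\{0,1\}^{g}$ so that $4\langle\delta_{1},\delta_{2}\rangle=\langle v^{\mathrm{ext}},\varepsilon\rangle$ is odd. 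For $v=0$ this trick fails: because $\text{Re}(\mathbb{B})$ is negative definite, the only half-periods lying in $\tilde{T}_{0}$ are those with $\delta_{1}\equiv0$, which are all even. This is the first indication that the case $v=0$ is genuinely different and must be handled analytically.

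For the converse, suppose the curve is an M-curve and $v=0$. Then $\chi=g+1$ forces $\mathrm{rank}(\mathbb{H})=0$, so $\mathbb{H}=0$, and by (\ref{matrix B}) the Riemann matrix $\mathbb{B}$ is real and negative definite, while $\text{diag}(\mathbb{H})=0$ and $\tilde{T}_{0}=\{2\mathrm{i}\pi\alpha:\alpha\in(\R/\Z)^{g}\}$. Writing $\mathbb{B}=-2\pi\mathcal{T}$ with $\mathcal{T}$ real symmetric positive definite, Poisson summation turns the restricted theta series into a manifestly positive sum,
\[
\Theta(2\mathrm{i}\pi\alpha)=\sum_{m\in\Z^{g}}\mathrm{e}^{-\pi\langle\mathcal{T}m,m\rangle+2\mathrm{i}\pi\langle m,\alpha\rangle}=\frac{1}{\sqrt{\det\mathcal{T}}}\sum_{n\in\Z^{g}}\mathrm{e}^{-\pi\langle\mathcal{T}^{-1}(n+\alpha),\,n+\alpha\rangle}>0,
\]
so $\Theta$ never vanishes on $\tilde{T}_{0}$ and the intersection is empty.

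It remains to show that for $v=0$ on a curve with real ovals which is \emph{not} an M-curve the intersection is nonempty; combined with the $v\neq0$ case this yields the full equivalence, and I expect it to be the main obstacle. Now $\mathbb{H}\neq0$, so by (\ref{matrix B}) the matrix $\mathbb{B}$ has nonzero imaginary part and the Poisson-transformed series acquires phases $\mathrm{e}^{-\mathrm{i}\pi\langle\mathbb{H}m,m\rangle}$ that destroy the sign-definiteness exploited above. One then expects the real restriction of $\Theta$ to the full $g$-torus $\tilde{T}_{0}$ to change sign, hence to vanish, but the difficulty is exactly that, as in the companion statement (\ref{prop div S1}) for $S_{1}$, non-vanishing cannot be read off from half-periods and genuinely requires controlling the real zero locus of $\Theta$ on a full-dimensional real subtorus. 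I would make this rigorous by following the analysis of Dubrovin and Natanzon \cite{DN}: one relates the parity of the number of zeros of the real theta restriction on $\tilde{T}_{0}$ to the topological type of the real curve --- equivalently to the count of connected components of the real part of the Jacobian meeting the theta divisor --- using the description (\ref{3.12}) and the reality relation (\ref{matrix B}), and shows this parity is nonzero precisely when $\mathbb{B}$ fails to be real, i.e.\ away from M-curves.
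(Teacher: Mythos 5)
The paper does not actually prove this proposition: it is quoted as a known result, with the proof attributed entirely to Dubrovin and Natanzon \cite{DN} (just as Proposition \ref{prop div S1} is quoted from \cite{Vin}). So the relevant comparison is between your argument and the content of \cite{DN}. Two of your three cases are genuinely proved and correct. The odd--half-period computation for $v\neq 0$ is standard and works: setting the $\alpha_{j}$ to half-integers in (\ref{3.12}) produces the half-period with $\delta_{1}=\tfrac{1}{2}v^{\mathrm{ext}}$, $\delta_{2}=\tfrac{1}{2}\varepsilon$, and for $v\neq 0$ one can choose $\varepsilon$ making $\langle v^{\mathrm{ext}},\varepsilon\rangle$ odd, so $\Theta$ vanishes there by (\ref{2.3}). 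The M-curve case is also correct and is in fact a cleaner, self-contained argument than one usually sees quoted: $\chi=g+1$ forces $\mathrm{rank}(\mathbb{H})=0$, hence $\mathbb{B}$ real negative definite by (\ref{matrix B}), and Poisson summation exhibits $\Theta$ on $\tilde{T}_{0}$ as a sum of positive Gaussians.

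The gap is the remaining implication: for a curve with real ovals that is \emph{not} an M-curve, you must show $\tilde{T}_{0}\cap(\Theta)\neq\emptyset$, and here you only gesture at a parity argument and defer to \cite{DN}. This case cannot be dispatched by half-periods --- as you yourself observe, the only half-periods on $\tilde{T}_{0}$ are even --- and the sentence ``one relates the parity of the number of zeros \ldots to the topological type of the real curve'' is a statement of the theorem to be proved, not a proof. The actual argument in \cite{DN} goes through Riemann's vanishing theorem: points of $\tilde{T}_{0}\cap(\Theta)$ correspond to effective divisors of degree $g-1$ compatible with the anti-involution, and one counts which components of the real locus of the Jacobian are reached by such divisors in terms of the number of real ovals; none of that machinery appears in your sketch. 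So as a standalone proof the proposal is incomplete precisely at the one point where the proposition has nontrivial content beyond elementary theta-function identities; to close it you would either have to reproduce the divisor-counting argument of \cite{DN} or, like the paper, simply cite it.
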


\section{Algebro-geometric solutions of the Camassa-Holm equation}

In this section we will use Fay's identities to construct solutions to 
the CH equation on hyperelliptic surfaces. For the resulting formulae 
we establish conditions under which we obtain real and smooth solutions. 
In what follows  $\Rs_{g}$ denotes a hyperelliptic curve of genus $g>0$, written as
\begin{equation}
\mu^{2}=\prod_{i=1}^{2g+2}(\lambda-\lambda_{i}),  \label{HypCH}
\end{equation}
where the branch points $\lambda_{i}\in\C$ satisfy the relations $\lambda_{i}\neq \lambda_{j}$ for $i\neq j$. We denote by $\sigma$  the hyperelliptic involution defined by $\sigma(\lambda,\mu)=(\lambda,-\mu)$.
Note that the CH equation can be expressed in the following simple form,
\begin{equation}
m_{t}+u\, m_{x}+2\,m\, u_{x}=0, \label{CH}
\end{equation}
where we put $m:=u-u_{xx}+k$.

\subsection{Identities between theta functions}

In our approach to construct algebro-geometric solutions of the CH 
equation we use the corollaries (\ref{cor Fay1}) and (\ref{cor Fay2}) of Fay's identity.

\begin{proposition} 
Let $a,b\in\Rs_{g}$ such that $\sigma(a)=b$ and let $e\in\Rs_{g}$ be a ramification point, namely, $e=(\lambda_{j},0)$ for some $j\in\{1,\ldots,2g+2\}$. Denote by $g_{1}$ and $g_{2}$ the following functions of the variable $\z\in\C^{g}$:
\begin{equation}
g_{1}(\z)=\frac{\Theta\hspace{-2pt}\left(\z+\frac{\mr}{2}\right)}{\Theta(\z)}, \qquad  g_{2}(\z)=\frac{\Theta\hspace{-2pt}\left(\z-\frac{\mr}{2}\right)}{\Theta(\z)}, \label{giCH}
\end{equation}
where $\mr=\int_{a}^{b}\omega$ and $\omega$ is the vector of normalized holomorphic differentials. Then the two following identities hold:
\begin{equation}
D_{b}D_{e}\ln\frac{g_{1}}{g_{2}}=-\,\frac{p_{2}}{g_{1}g_{2}}\,D_{b}\ln g_{1}g_{2}, \label{cor1CH}
\end{equation}
\begin{equation}
D_{b}D_{e}\ln(g_{1}g_{2})=\frac{\tilde{q}_{2}}{\tilde{p}_{2}}\,\frac{1}{g_{1}g_{2}}\left(D_{b}\ln\frac{g_{1}}{g_{2}}-2\,\tilde{p}_{1}\right)-2\,\tilde{q}_{2}\,g_{1}g_{2}.\label{cor2CH}
\end{equation}
Here we used the notation:
\begin{equation}
p_{2}=p_{2}(b,e,a),\qquad \tilde{p}_{i}=p_{i}(e,b,a), \qquad \tilde{q}_{2}=q_{2}(b,e), \label{notCH}
\end{equation}
where the scalars $q_{2}(.,.)$ and $p_{i}(.,.,.),\,i=1,2,$  are 
defined in (\ref{q2}) and (\ref{p1}), (\ref{p2}); $D_{b}$ (respectively $D_{e}$) denotes the directional derivative along the vector $\mathbf{V}_{b}$ (respectively $\mathbf{V}_{e}$) defined in (\ref{exp hol diff}). 
\end{proposition}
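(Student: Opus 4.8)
The plan is to reduce both identities to the Fay corollaries (\ref{cor Fay1}) and (\ref{cor Fay2}), evaluated at suitable constant translates, after first using the hyperelliptic symmetry to turn the half-period $\mr/2$ into an honest Abel integral between points of $\Rs_{g}$. The key geometric observation is that $e$, being a ramification point, is fixed by $\sigma$, while $\sigma a=b$ by hypothesis and every holomorphic differential on a hyperelliptic curve is anti-invariant, $\sigma^{*}\omega_{j}=-\omega_{j}$. Applying $\sigma$ to a path from $a$ to $e$ thus produces a path from $b$ to $e$ with opposite periods, whence $\int_{a}^{e}=\int_{e}^{b}$; fixing the representative $\mr=\int_{a}^{b}\omega$ along the concatenated path gives $\int_{a}^{e}=\int_{e}^{b}=\mr/2$. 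Consequently $g_{1}(\z)=\Theta(\z+\int_{a}^{e})/\Theta(\z)$ and $g_{2}(\z)=\Theta(\z+\int_{e}^{a})/\Theta(\z)$, so the shifts by $\pm\mr/2$ are Abel integrals between distinct points and the corollaries apply.

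Next I would compute $D_{b}D_{e}\ln\Theta(\z)$ from (\ref{cor Fay2}) for the pair $(b,e)$, which is legitimate since $b\neq e$. Translating this identity by $0$ and by $\pm\mr/2$ (permissible because $D_{b}$ and $D_{e}$ are constant-coefficient operators) and combining the three instances, the constants $q_{1}(b,e)$ cancel and one obtains $D_{b}D_{e}\ln(g_{1}/g_{2})=\tilde q_{2}\,(Q_{+}-Q_{-})$ together with $D_{b}D_{e}\ln(g_{1}g_{2})=\tilde q_{2}\,(Q_{+}+Q_{-})-2\tilde q_{2}\,g_{1}g_{2}$, where $Q_{\pm}=\Theta(\z)\Theta(\z\pm\mr)/\Theta(\z\pm\mr/2)^{2}$ and $\tilde q_{2}=q_{2}(b,e)$; note the last term already matches the final term of (\ref{cor2CH}).

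Separately I would apply (\ref{cor Fay1}) along $D_{b}$ to $g_{1}$ and to $g_{2}$ --- legitimate since the endpoints $a,e$ differ from $b$ --- to express $D_{b}\ln(g_{1}g_{2})$ and $D_{b}\ln(g_{1}/g_{2})$ in terms of $\tilde p_{1}=p_{1}(e,b,a)$ and $\tilde p_{2}=p_{2}(e,b,a)$. This requires re-expressing the constants $p_{i}(a,b,e)$ produced by $g_{2}$ in terms of the tilded ones; from the explicit formulas (\ref{p1})--(\ref{p2}) together with the oddness $\Theta[\delta](-x)=-\Theta[\delta](x)$ one finds $p_{1}(a,b,e)=-\tilde p_{1}$ and $p_{2}(a,b,e)=-\tilde p_{2}$, so the $\tilde p_{1}$-terms cancel in $D_{b}\ln(g_{1}g_{2})$ and add to $2\tilde p_{1}$ in $D_{b}\ln(g_{1}/g_{2})$; after dividing by $g_{1}g_{2}=\Theta(\z+\mr/2)\Theta(\z-\mr/2)/\Theta(\z)^{2}$ each of these also collapses to a combination of $Q_{+}$ and $Q_{-}$ with coefficient $\tilde p_{2}$.

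Finally I would match the two computations. Comparing the expressions for $D_{b}D_{e}\ln(g_{1}/g_{2})$ and $D_{b}D_{e}\ln(g_{1}g_{2})$ with the right-hand sides of (\ref{cor1CH}) and (\ref{cor2CH}), each claimed identity becomes an equality of the same linear combinations of $Q_{+}$ and $Q_{-}$: identity (\ref{cor2CH}) holds already by virtue of the sign relations $p_{i}(a,b,e)=-\tilde p_{i}$ (with $\tilde q_{2}$ and $\tilde p_{2}$ carried through as common factors), while identity (\ref{cor1CH}) reduces to the single scalar relation $\tilde q_{2}=-\,p_{2}(b,e,a)\,\tilde p_{2}$, which follows at once by substituting (\ref{p2}) and (\ref{q2}) and using the oddness of $\Theta[\delta]$. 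I expect the main obstacle to be precisely this sign bookkeeping: the half-period identity, the reversal of orientation of the Abel integrals, and the oddness of $\Theta[\delta]$ together produce the relations $p_{i}(a,b,e)=-p_{i}(e,b,a)$ and $\tilde q_{2}=-p_{2}(b,e,a)\tilde p_{2}$ on which the entire cancellation hinges; once these are secured, the remaining steps are routine algebra.
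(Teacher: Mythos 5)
Your proof is correct; it shares the paper's skeleton (the half-period relation $\int_{a}^{e}=\int_{e}^{b}=\mr/2$ from $\sigma^{*}\omega_{j}=-\omega_{j}$, plus the two Fay corollaries with the argument permutations $(b,e,a)$ and $(e,b,a)$), but it finishes the first identity by a genuinely different route. For (\ref{cor2CH}) the two arguments essentially coincide: both evaluate (\ref{cor Fay2}) for the pair $(b,e)$ at the three translates $\z,\,\z\pm\mr/2$ and eliminate the resulting quotients $Q_{\pm}=\Theta(\z)\Theta(\z\pm\mr)/\Theta(\z\pm\mr/2)^{2}$ via the first-order expressions for $D_{b}\ln g_{1}$ and $D_{b}\ln g_{2}$; the only cosmetic difference is that the paper gets $D_{b}\ln g_{2}$ from $D_{b}\ln g_{1}$ by the parity substitution $\z\to-\z$, whereas you re-apply (\ref{cor Fay1}) with permuted arguments and track the sign relations $p_{i}(a,b,e)=-p_{i}(e,b,a)$ coming from the oddness of $\Theta[\delta]$ --- these are equivalent. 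For (\ref{cor1CH}), however, the paper's proof is a one-liner that never touches (\ref{cor Fay2}): substituting $(a,b,c)\to(b,e,a)$ and $\z\to\z-\mr/2$ in (\ref{cor Fay1}) gives $D_{e}\ln(g_{1}/g_{2})=p_{1}+p_{2}/(g_{1}g_{2})$ with constant $p_{1},p_{2}$, and applying the constant-coefficient operator $D_{b}$ yields (\ref{cor1CH}) immediately. You instead compute $D_{b}D_{e}\ln(g_{1}/g_{2})=\tilde q_{2}(Q_{+}-Q_{-})$ from (\ref{cor Fay2}) and close the loop with the scalar relation $\tilde q_{2}=-p_{2}\,\tilde p_{2}$, which you correctly reduce to (\ref{p2}), (\ref{q2}) and the oddness of $\Theta[\delta]$. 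This is more laborious but sound, and it has the incidental benefit of proving the identity $\tilde q_{2}=-\tilde p_{2}\,p_{2}$, which the paper invokes without proof later, in the derivation of $u_{xx}$ in the proof of Theorem~\ref{prop sol CH}.
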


\begin{proof}
Under the changes of variables $(a,b,c)\rightarrow(b,e,a)$ and $\z\rightarrow \z-\mr/2$, identity (\ref{cor Fay1}) becomes
\begin{equation}
D_{e}\ln \frac{g_{1}}{g_{2}}=p_{1}+\frac{p_{2}}{g_{1}g_{2}}, \label{proof1CH}
\end{equation}
where we used the notation $p_{i}=p_{i}(b,e,a)$ for $i=1,2$.
Here we used the fact that 
$\int_{a}^{e}\omega=\int_{e}^{b}\omega=\mr/2$, according to the 
action of $\sigma^{*}$ (the action of $\sigma$ lifted to the space of 
one-forms) on the normalized holomorphic differentials $\omega_{j}$:
\begin{equation}
\sigma^{*}\omega_{j}=-\,\omega_{j}, \qquad j=1,\ldots,g. \label{sig}
\end{equation}
Applying the differential operator $D_{b}$ to equation (\ref{proof1CH}) one gets:
\[D_{b}D_{e}\ln\frac{g_{1}}{g_{2}}=-\,p_{2}\,\frac{D_{b}(g_{1}g_{2})}{(g_{1}g_{2})^{2}}=-\,\frac{p_{2}}{g_{1}g_{2}}\,D_{b}\ln g_{1}g_{2},\]
which proves (\ref{cor1CH}). To prove (\ref{cor2CH}), consider the change of variables $(a,b,c)\rightarrow(e,b,a)$ in (\ref{cor Fay1}), which leads to
\begin{equation}
D_{b}\ln g_{1}=\tilde{p}_{1}+\tilde{p}_{2}\,g_{2}\,\frac{\Theta(\z+\mr)}{\Theta\hspace{-2pt}\left(\z+\frac{\mr}{2}\right)}. \nonumber
\end{equation}
Changing $\z$ to $-\z$ in the last equality, one gets
\begin{equation}
D_{b}\ln g_{2}=-\,\tilde{p}_{1}-\tilde{p}_{2}\,g_{1}\,\frac{\Theta(\z-\mr)}{\Theta\hspace{-2pt}\left(\z-\frac{\mr}{2}\right)}. \nonumber
\end{equation}
From these two identities, it can be deduced that
\begin{align}
\frac{\Theta(\z+\mr)}{\Theta\hspace{-2pt}\left(\z+\frac{\mr}{2}\right)}&=(\tilde{p}_{2}\,g_{2})^{-1}\,(D_{b}\ln g_{1}-\tilde{p}_{1}),\label{proof4CH}
\\
 \frac{\Theta(\z-\mr)}{\Theta\hspace{-2pt}\left(\z-\frac{\mr}{2}\right)}&=-\,(\tilde{p}_{2}\,g_{1})^{-1}\,(D_{b}\ln g_{2}+\tilde{p}_{1})\label{proof5CH}.
\end{align}
Moreover, since
\[D_{b}D_{e}\ln(g_{1}g_{2})=D_{b}D_{e}\ln\Theta\hspace{-2pt}\left(\z+\frac{\mr}{2}\right)+D_{b}D_{e}\ln\Theta\hspace{-2pt}\left(\z-\frac{\mr}{2}\right)-2\,D_{b}D_{e}\ln\Theta(\z),\]
using (\ref{cor Fay2}) one gets
\begin{equation}
D_{b}D_{e}\ln(g_{1}g_{2})=\frac{\tilde{q}_{2}}{g_{1}}\,\frac{\Theta(\z+\mr)}{\Theta\hspace{-2pt}\left(\z+\frac{\mr}{2}\right)}+\frac{\tilde{q}_{2}}{g_{2}}\,\frac{\Theta(\z-\mr)}{\Theta\hspace{-2pt}\left(\z-\frac{\mr}{2}\right)}-2\,\tilde{q}_{2}\,g_{1}g_{2},  \label{proof6CH}
\end{equation}
which by (\ref{proof4CH}) and (\ref{proof5CH}) leads to (\ref{cor2CH}).
\end{proof}

\noindent
With identities (\ref{cor1CH}) and (\ref{cor2CH}) we are now able to construct theta-functional solutions of the CH equation:

\begin{theorem} Let $a,b\in\Rs_{g}$ such that 
$\sigma(a)=b$, and let $e\in\Rs_{g}$ be a ramification point. Denote by 
$\ell$ an oriented contour between $a$ and $b$ which contains the point $e$. Assume that $\ell$ does not 
cross cycles of the canonical homology basis. Choose arbitrary 
constants $\mathbf{d}\in\C^{g}$ and $k,\,\zeta\in\C$, and put
\begin{equation}
\alpha_{1}=p_{1}(b,e,a), \qquad \alpha_{2}=2\,p_{1}(e,b,a)+k, \label{alpha CH}
\end{equation}
where the function $p_{1}$ is defined in (\ref{p1}). Let $y(x,t)$ be an implicit function of the variables $x,\,t\in\R$ defined by
\begin{equation}
x+\alpha_{1}\,y+\alpha_{2}\,t+\zeta=\ln \frac{\Theta\hspace{-2pt}\left(\mathbf{Z}-\mathbf{d}+\frac{\mr}{2}\right)}{\Theta\hspace{-2pt}\left(\mathbf{Z}-\mathbf{d}-\frac{\mr}{2}\right)},  \label{implicit equ}
\end{equation}
where $\mr=\int_{\ell}\omega$. Here the vector $\mathbf{Z}$ is given by
\begin{equation}
\mathbf{Z}(x,t)=\mathbf{V}_{e}\,y(x,t)+\mathbf{V}_{b}\,t, \label{Z CH}
\end{equation}
where the vectors $\mathbf{V}_{e}$ and $\mathbf{V}_{b}$ are defined in (\ref{exp hol diff}). Then the following function of the variables $x$ and $t$ is solution of the CH equation:
\begin{equation}
u(x,t)=D_{b}\ln \frac{\Theta\hspace{-2pt}\left(\mathbf{Z}-\mathbf{d}+\frac{\mr}{2}\right)}{\Theta\hspace{-2pt}\left(\mathbf{Z}-\mathbf{d}-\frac{\mr}{2}\right)}-\alpha_{2}. \label{sol u CH} \\
\end{equation}
Here $D_{b}$ denotes the directional derivative along the vector $\mathbf{V}_{b}$.  \label{prop sol CH}
\end{theorem}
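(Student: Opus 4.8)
The plan is to feed the ansatz into the compact form (\ref{CH}), namely $m_{t}+u\,m_{x}+2\,m\,u_{x}=0$ with $m=u-u_{xx}+k$, and to reduce every derivative to the two Fay corollaries (\ref{cor1CH}), (\ref{cor2CH}) established in the Proposition, together with the implicit definition of $y$. Throughout I set $\z=\mathbf{Z}-\mathbf{d}$, so that $u=D_{b}\ln(g_{1}/g_{2})-\alpha_{2}$ and the implicit relation reads $x+\alpha_{1}y+\alpha_{2}t+\zeta=\ln(g_{1}/g_{2})$, with $g_{1},g_{2}$ of (\ref{giCH}) evaluated at $\z$. First I would extract $y_{x},y_{t}$ from this relation. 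Since $\partial_{x}\z=\mathbf{V}_{e}\,y_{x}$ and $\partial_{t}\z=\mathbf{V}_{e}\,y_{t}+\mathbf{V}_{b}$, the chain rule turns $\partial_{x}$ and $\partial_{t}$ acting on any function of $\z$ into $y_{x}D_{e}$ and $y_{t}D_{e}+D_{b}$. Differentiating the implicit relation then gives $y_{x}=\bigl(D_{e}\ln(g_{1}/g_{2})-\alpha_{1}\bigr)^{-1}$ and, because $u=D_{b}\ln(g_{1}/g_{2})-\alpha_{2}$, the key kinematic identity $y_{t}=-u\,y_{x}$. The choice $\alpha_{1}=p_{1}(b,e,a)$ together with (\ref{proof1CH}), i.e. $D_{e}\ln(g_{1}/g_{2})=p_{1}+p_{2}/(g_{1}g_{2})$, collapses the denominator to $y_{x}=g_{1}g_{2}/p_{2}$.

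Next I would assemble the quantities entering $m$. From $u_{x}=y_{x}\,D_{b}D_{e}\ln(g_{1}/g_{2})$ and (\ref{cor1CH}) the factor $p_{2}$ cancels against $y_{x}$, leaving $u_{x}=-D_{b}\ln(g_{1}g_{2})$. Differentiating once more yields $u_{xx}=-y_{x}\,D_{b}D_{e}\ln(g_{1}g_{2})$, and inserting (\ref{cor2CH}) after the rewriting $D_{b}\ln(g_{1}/g_{2})-2\tilde{p}_{1}=(u+\alpha_{2})-2\tilde{p}_{1}=u+k$ (which is precisely where $\alpha_{2}=2\,p_{1}(e,b,a)+k$ is used) produces $u_{xx}=-\frac{\tilde{q}_{2}}{p_{2}\tilde{p}_{2}}(u+k)+\frac{2\tilde{q}_{2}}{p_{2}}(g_{1}g_{2})^{2}$. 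Hence $m=(u+k)\bigl(1+\tilde{q}_{2}/(p_{2}\tilde{p}_{2})\bigr)-\frac{2\tilde{q}_{2}}{p_{2}}(g_{1}g_{2})^{2}$.

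The crux is then a purely constant computation. Using the explicit formulas (\ref{p1})--(\ref{q2}), the oddness of the non-singular characteristic $\delta$, and the relations $\int_{a}^{e}\omega=\int_{e}^{b}\omega=\mr/2$ (so $\int_{a}^{b}\omega=\mr$), I would evaluate $p_{2}=p_{2}(b,e,a)$, $\tilde{p}_{2}=p_{2}(e,b,a)$ and $\tilde{q}_{2}=q_{2}(b,e)$ in closed form in terms of $\Theta[\delta](\mr)$, $\Theta[\delta](\mr/2)$, $D_{b}\Theta[\delta](0)$ and $D_{e}\Theta[\delta](0)$, repeatedly invoking $\Theta[\delta](-\mr/2)=-\Theta[\delta](\mr/2)$. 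The two relations that must emerge are $p_{2}\tilde{p}_{2}=-\tilde{q}_{2}$ and $\tilde{q}_{2}/p_{2}=-\tilde{p}_{2}$. The first forces $1+\tilde{q}_{2}/(p_{2}\tilde{p}_{2})=0$, so the $(u+k)$-term disappears and $m$ reduces to a \emph{fixed constant} times $(g_{1}g_{2})^{2}$, say $m=C\,(g_{1}g_{2})^{2}$. This cancellation is the heart of the matter and the step most likely to demand care with signs and with the ordering of arguments in $p_{i},q_{i}$.

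Finally, with $m=C\,G^{2}$ where $G:=g_{1}g_{2}$, I would verify (\ref{CH}) directly by chain rule. Since $G_{x}=y_{x}D_{e}G$ and $G_{t}=y_{t}D_{e}G+D_{b}G=-u\,y_{x}D_{e}G+D_{b}G$, one gets $m_{t}+u\,m_{x}=2CG\,(G_{t}+u\,G_{x})=2CG\,D_{b}G$, while $2\,m\,u_{x}=2CG^{2}\bigl(-D_{b}\ln G\bigr)=-2CG\,D_{b}G$; the two terms cancel and the CH equation holds for all $(x,t)$ in the domain where $y$ is well defined. The only genuine obstacle is the constant bookkeeping in the crux step; once the two algebraic relations among $p_{2},\tilde{p}_{2},\tilde{q}_{2}$ are secured, everything else is routine differentiation.
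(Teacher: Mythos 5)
Your proposal is correct and follows essentially the same route as the paper's proof: both reduce everything to the two Fay corollaries (\ref{cor1CH})--(\ref{cor2CH}), derive $y_{x}=g_{1}g_{2}/p_{2}$ and $y_{t}=-u\,y_{x}$, compute $u_{x}=-D_{b}\ln(g_{1}g_{2})$ and $u_{xx}$, and find that $m$ collapses to a constant multiple of $(g_{1}g_{2})^{2}$ via the single relation $\tilde{q}_{2}=-\tilde{p}_{2}\,p_{2}$ (your ``two relations'' are equivalent), after which (\ref{CH}) is immediate. The only differences are cosmetic: the paper carries undetermined constants $\beta,\delta$ and fixes them at the end, whereas you substitute the stated formula directly, and you propose to actually verify the key constant identity from (\ref{p1})--(\ref{q2}) and the oddness of $\Theta[\delta]$ --- a step the paper merely asserts, and which does check out.
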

Note that the function $y(x,t)$ is the same function as introduced in 
    \cite{AFY2}.

\begin{proof}
Let $\beta,\delta\in\C$ and $\alpha_{1},\alpha_{2}\in\C$ be arbitrary constants. Let us look for solutions $u$ of CH having the form
\begin{equation}
u(x,t)=\beta\,D_{b}\ln \frac{\Theta\hspace{-2pt}\left(\mathbf{Z}-\mathbf{d}+\frac{\mr}{2}\right)}{\Theta\hspace{-2pt}\left(\mathbf{Z}-\mathbf{d}-\frac{\mr}{2}\right)}+\delta=\beta\,D_{b}\ln \frac{g_{1}}{g_{2}}+\delta, \label{u proof CH}
\end{equation}
where $\mathbf{Z}(x,t)$ is defined in (\ref{Z CH}), and the functions $g_{1}, g_{2}$ were 
introduced in (\ref{giCH}) with $\z=\mathbf{Z}(x,t)-\mathbf{d}$.
By (\ref{implicit equ}), the derivative with respect to the variable $x$ of the implicit function $y(x,t)$ is given by
\begin{equation}
y_{x}=\left(D_{e}\ln\frac{g_{1}}{g_{2}}-\alpha_{1}\right)^{-1}. \nonumber
\end{equation}
With $\alpha_{1}=p_{1}(b,e,a)$, relation (\ref{proof1CH}) implies
\begin{equation}
y_{x}=\frac{g_{1}g_{2}}{p_{2}}.\label{y_x bis}
\end{equation}
Analogously it can be checked that 
\begin{equation}
y_{t}=-\,y_{x}\left(\frac{u}{\beta}-\frac{\delta}{\beta}-\alpha_{2}\right). \label{y_t}
\end{equation}
Now let us express the function $m(x,t)=u-u_{xx}+k$ introduced in 
(\ref{CH}) in terms of the functions $g_{1}$ and $g_{2}$ of 
(\ref{giCH}). By (\ref{cor1CH}) and (\ref{y_x bis}), the first derivative of the function $u$ (\ref{u proof CH}) with respect to the variable $x$ is given by
\begin{equation}
u_{x}=-\,\beta\,D_{b}\ln (g_{1}g_{2}).  \label{u_x}
\end{equation}
By  (\ref{cor2CH})  and (\ref{y_x bis}) we obtain for the second derivative of $u$ with respect to $x$:
\begin{equation}
u_{xx}=\beta\left(D_{b}\ln\frac{g_{1}}{g_{2}}-2\,\tilde{p}_{1}\right)-2\beta\,\tilde{p}_{2}\,(g_{1}g_{2})^{2}\,;  \label{u_xx}
\end{equation}
here we used the identity $\tilde{q}_{2}=-\,\tilde{p}_{2}\,p_{2}$  relating the scalars $\tilde{q}_{2},\tilde{p}_{2}$ and $p_{2}$ defined in (\ref{notCH}).
Therefore, with (\ref{u proof CH}) and (\ref{u_xx}), the function $m$ reads 
\begin{equation}
m(x,t)=\delta+k+2\beta\,\tilde{p}_{1}+2\beta\,\tilde{p}_{2}\,(g_{1}g_{2})^{2}. \label{m}
\end{equation}
Taking the derivative of $m$ with respect to $x$, and the derivative of $m$ with respect to $t$, one gets respectively:
\begin{align}
m_{x}(x,t)&=4\beta\,\tilde{p}_{2}\,(g_{1}g_{2})^{2}\,y_{x}\,D_{e}\ln(g_{1}g_{2}), \label{m_x}\\
m_{t}(x,t)&=4\beta\,\tilde{p}_{2}\,(g_{1}g_{2})^{2}\,\left(y_{t}\,D_{e}\ln(g_{1}g_{2})+D_{b}\ln(g_{1}g_{2})\right). \label{m_t}
\end{align}
Therefore, substituting the functions (\ref{u proof CH}), 
(\ref{u_x}), (\ref{m_x}) and (\ref{m_t}) in the left-hand side of the CH equation 
(\ref{CH}) we obtain
\begin{eqnarray}
2\,\tilde{p}_{2}\,D_{e}\ln(g_{1}g_{2})\,y_{x}\,(g_{1}g_{2})^{2}\,\left[u\left(1-\frac{1}{\beta}\right)+\frac{\delta}{\beta}+\alpha_{2}\right]& \nonumber\\
-\,D_{b}\ln(g_{1}g_{2})\,\left(\delta+k+2\beta\,\tilde{p}_{1}+2\,\tilde{p}_{2}\,(g_{1}g_{2})^{2}\,(\beta-1)\right)\,=\,0. \nonumber
\end{eqnarray}
This equality holds for $\beta=1,\,\delta=-\,2\,\tilde{p}_{1}-k$ and $\alpha_{2}=-\,\delta,$
which completes the proof.
\end{proof}

\subsection{Real-valued solutions and smoothness conditions}

In this subsection, we  identify  real-valued 
and smooth solutions of the CH equation among the solutions given in 
Proposition 3.2. 
Let us first recall that hyperelliptic M-curves of genus $g$ can be given by the equation
\begin{equation}
\mu^{2}= \prod_{i=1}^{2g+2} (\lambda-\lambda_{i}), \label{hyp real1}
\end{equation}
where the branch points $\lambda_{i}$ are real and satisfy $\lambda_{i}\neq \lambda_{j}$ if $i\neq j$. 
On such a curve, we can define 
two anti-holomorphic involutions $\tau_{1}$ and $\tau_{2}$, given 
respectively by 
$\tau_{1}(\lambda,\mu)=(\overline{\lambda},\overline{\mu})$ and 
$\tau_{2}(\lambda,\mu)=(\overline{\lambda},-\overline{\mu})$. Let us show that the curve (\ref{hyp real1}) is an M-curve with respect to both anti-involutions $\tau_{1}$ and $\tau_{2}$.
In the case where $\lambda_{i}\in\R$ satisfy $\lambda_{1}<\ldots<\lambda_{2g+2}$, it can be seen that projections of real ovals of $\tau_{1}$  on the $\lambda$-plane coincide with  the intervals $[\lambda_{2g+2},\lambda_{1}],[\lambda_{2},\lambda_{3}],\ldots,[\lambda_{2g},\lambda_{2g+1}]$, whereas projections of real ovals of $\tau_{2}$ on the $\lambda$-plane coincide with  the intervals $[\lambda_{1},\lambda_{2}],\ldots,[\lambda_{2g+1},\lambda_{2g+2}]$. Hence the curve (\ref{hyp real1}) has the maximal number $g+1$ of real ovals with respect to both anti-involutions $\tau_{1}$ and $\tau_{2}$.

Now assume that $\Rs_{g}$ is a real 
hyperelliptic curve which admits real ovals with respect to an anti-holomorphic involution $\tau$.
Let us choose a homology 
basis satisfying (\ref{hom basis}). Recall that $\Rs_{g}(\R)$ denotes the set of fixed points of the anti-holomorphic involution $\tau$.

The following propositions provide
reality and smoothness conditions for the solutions $u(x,t)$ (\ref{sol u 
CH}) in the case where the points $a$ and $b$ are stable under 
$\tau$, and in the case where $\tau a=b$. It is proved 
that, for fixed $t_{0}\in\R$, the function $u(x,t_{0})$ is smooth 
with respect to the real variable $x$ when $a$ and $b$ are stable 
under $\tau$. In the case where $\tau a=b$,  the function 
$u(x,t_{0})$ is either smooth, or it has cusp-like singularities. \index{cusp-like singularity}

\begin{proposition} Assume that $\Rs_{g}$ is a hyperelliptic M-curve of genus $g$,  and denote by $e\in\Rs_{g}(\R)$ one of its ramification points. Let $a,b\in\Rs_{g}(\R)$ such that $\sigma(a)=b$.
For any $c\in\{a,b,e\}$, choose a local parameter $k_{c}$ such that $\overline{k_{c}(\tau 
p)}=k_{c}(p)$ for any point $p$ in a neighborhood of $c$.  Denote by $\ell$ an oriented contour 
between $a$ and $b$ containing point $e$ which does not intersect cycles of the canonical homology basis. Choose $\ell$ such that the closed path $\tau \ell-\ell$ is homologous to zero in $H_{1}(\Rs_{g})$. Take $\mathbf{d}\in \mathrm{i}\R^{g}$ and $k\in\R$. Choose $\zeta\in\C$ in (\ref{implicit equ}) such that  $\text{Im}(\zeta)=\arg\left\{\ln\left(\frac{\Theta(\mathbf{d}+\mr/2)}{\Theta(\mathbf{d}-\mr/2)}\right)\right\}$. Then solutions $u(x,t)$ of the CH equation given in (\ref{sol u CH}) are real-valued, and for fixed $t_{0}\in\R$, the function $u(x,t_{0})$ is smooth with respect to the real variable $x$. \label{prop real sol CH1}
\end{proposition}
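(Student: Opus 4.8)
The plan is to exploit in full the \emph{M-curve} hypothesis, which forces $\mathrm{rank}(\mathbb{H})=g+1-\chi=0$, i.e.\ $\mathbb{H}=0$. Then (\ref{matrix B}) shows that the Riemann matrix $\mathbb{B}$ is real, and (\ref{conj theta}) collapses to $\overline{\Theta(\z)}=\kappa\,\Theta(\overline{\z})$. I would first record the basic reality relations. Since $a,b,e\in\Rs_{g}(\R)$ are fixed by $\tau$ and the local parameters satisfy $\overline{k_{c}(\tau p)}=k_{c}(p)$, expanding (\ref{diff hol}) in the chart (\ref{exp hol diff}) yields $\overline{V_{c,j}}=-V_{c,j}$, hence $\mathbf{V}_{a},\mathbf{V}_{b},\mathbf{V}_{e}\in\mathrm{i}\R^{g}$. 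Likewise, writing $\overline{\omega_{j}}=-\tau^{*}\omega_{j}$ from (\ref{diff hol}) gives $\overline{\mr}=\overline{\int_{\ell}\omega}=-\int_{\tau\ell}\omega$; the hypothesis that $\tau\ell-\ell$ is homologous to zero then forces $\int_{\tau\ell}\omega=\mr$, so $\overline{\mr}=-\mr$, i.e.\ $\mr\in\mathrm{i}\R^{g}$.

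Next I would prove that the constants $\alpha_{1},\alpha_{2}$ and $p_{2}$ in (\ref{alpha CH}), (\ref{notCH}) are real. Since the scalars $p_{i},q_{i}$ of (\ref{p1})--(\ref{q2}) are intrinsic (independent of the chosen odd characteristic), I would select a \emph{real} non-singular odd characteristic $\delta$ (available on an M-curve), for which the argument behind (\ref{conj theta}) upgrades to $\overline{\Theta[\delta](\z)}=\kappa'\,\Theta[\delta](\overline{\z})$ with $\kappa'$ a root of unity. Recalling from (\ref{sig}) that $\int_{a}^{e}\omega=\int_{e}^{b}\omega=\mr/2$, the arguments $0,\pm\mr/2,\pm\mr$ occurring in (\ref{p1}), (\ref{p2}) all lie in $\mathrm{i}\R^{g}$; combining $\overline{\Theta[\delta](\mathbf{w})}=-\kappa'\,\Theta[\delta](\mathbf{w})$ (oddness, $\overline{\mathbf{w}}=-\mathbf{w}$) with the analogous identity for the gradient (evenness) and with $\mathbf{V}_{b},\mathbf{V}_{e}\in\mathrm{i}\R^{g}$, every factor $-\kappa'$ cancels between numerator and denominator, giving $\alpha_{1}=p_{1}(b,e,a)\in\R$, $p_{1}(e,b,a)\in\R$ (hence $\alpha_{2}=2p_{1}(e,b,a)+k\in\R$ since $k\in\R$), and $p_{2}=p_{2}(b,e,a)\in\R$, with $p_{2}\neq0$ because $\delta$ is non-singular.

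I would then control the theta quotient through the theta divisor. For real $x,t$ and a real value of $y$, the vector $\z:=\mathbf{Z}-\mathbf{d}$ of (\ref{Z CH}) and the shifts $\z\pm\mr/2$ are all purely imaginary (as $\mathbf{V}_{e},\mathbf{V}_{b},\mathbf{d},\mr\in\mathrm{i}\R^{g}$), hence have vanishing real part and lie in the torus $\tilde{T}_{0}\subset S_{2}$ of (\ref{S2}), (\ref{3.12}). By Proposition~\ref{prop div S2}, an M-curve satisfies $\tilde{T}_{0}\cap(\Theta)=\emptyset$, so $\Theta(\z)$ and $\Theta(\z\pm\mr/2)$ never vanish along the real locus. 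Using $\overline{\Theta(\z\pm\mr/2)}=\kappa\,\Theta(\z\pm\mr/2)$ (conjugation plus evenness), the ratio $\Theta(\z+\mr/2)/\Theta(\z-\mr/2)$ is real; being nonzero and continuous on the connected set of real $(y,t)$, it keeps a constant sign, so the right-hand side of (\ref{implicit equ}) has a constant imaginary part. This is precisely the constant that the prescribed value of $\mathrm{Im}(\zeta)$ is designed to cancel, whence (\ref{implicit equ}) reduces to a genuinely real equation determining $y$.

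Finally I would settle solvability, smoothness, and reality of $u$. The combination $g_{1}g_{2}=\Theta(\z+\mr/2)\Theta(\z-\mr/2)/\Theta(\z)^{2}$ of (\ref{giCH}) is invariant under the period lattice (the automorphy factors of numerator and denominator cancel), hence descends to a continuous function on the Jacobian; restricted to the compact torus $\tilde{T}_{0}$, on which it is finite and nowhere zero by the previous step, it is bounded and bounded away from $0$. Differentiating (\ref{implicit equ}) and inserting (\ref{proof1CH}) with $\alpha_{1}=p_{1}(b,e,a)$ yields $y_{x}=g_{1}g_{2}/p_{2}$, which is real, finite, nonzero and of constant sign; consequently $x\mapsto y(x,t_{0})$ is, as the inverse of a strictly monotone map with derivative bounded away from $0$, a smooth bijection of $\R$. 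Then $u$ in (\ref{sol u CH}) equals $\partial_{t}\big[\ln(\Theta(\z+\mr/2)/\Theta(\z-\mr/2))\big]\big|_{y\,\mathrm{fixed}}-\alpha_{2}$; since this logarithm has constant imaginary part, its $t$-derivative is real, and with $\alpha_{2}\in\R$ we conclude $u\in\R$, while smoothness of $u(\cdot,t_{0})$ follows as a composition of smooth maps staying away from the theta divisor. The main obstacle is this last global step: guaranteeing that $y\mapsto x$ is an honest bijection of $\R$ with nowhere-vanishing derivative, which rests entirely on the non-vanishing $\tilde{T}_{0}\cap(\Theta)=\emptyset$ supplied by the M-curve hypothesis via Proposition~\ref{prop div S2}; establishing the reality of the intrinsic constants $\alpha_{1},\alpha_{2},p_{2}$ through a real odd characteristic is the secondary point requiring care.
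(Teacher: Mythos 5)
Your proposal is correct and follows the same overall architecture as the paper's proof: reality of $\mathbf{Z}$, $\mr$ and the constants; reality and constant sign of the quotient $h$ in (\ref{fct h}); non-vanishing of $\Theta(\mathbf{Z}-\mathbf{d}\pm\mr/2)$ via Proposition \ref{prop div S2} applied to $\tilde{T}_{0}$; and strict monotonicity of $x(y)$ from (\ref{x_y real}) to transfer smoothness from $y$ to $x$. Two of your sub-arguments differ from the paper's in a way worth noting. First, you exploit at the outset that an M-curve has $\chi=g+1$, hence $\mathbb{H}=0$, so $\mathbb{B}$ is real and all the $\mathrm{i}\pi\,\mathrm{diag}(\mathbb{H})$ shifts disappear; the paper instead carries the general $\mathbb{H}$-dependent congruence for $\mathbf{d}$ through the computation and only then specializes to $\mathbf{L}=0$, i.e.\ $\mathbf{d}\in\mathrm{i}\R^{g}$. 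Your shortcut is legitimate and makes the reality of $h$ essentially immediate from evenness of $\Theta$ at purely imaginary arguments. Second, for the reality of $\alpha_{1}$, $\alpha_{2}$ (and $p_{2}$) you work directly with the characteristic formulas (\ref{p1})--(\ref{p2}), using oddness/evenness of $\Theta[\delta]$ and its gradient at purely imaginary arguments together with $\mathbf{V}_{b},\mathbf{V}_{e}\in\mathrm{i}\R^{g}$; the paper instead invokes the identification (\ref{p1 Fay}) of $p_{1}$ with the normalized third-kind differential $\Omega_{a-b}$, $\Omega_{a-e}$ evaluated in the real local parameter, and its reality property (\ref{diff 3kind real}). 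Both are valid; your route avoids introducing $\Omega_{b-a}$ but requires the (true, and worth stating) observation that $p_{1},p_{2}$ are determined by the identity (\ref{cor Fay1}) itself and hence independent of the chosen non-singular odd characteristic. Your closing remarks — periodicity of $g_{1}g_{2}$ on the Jacobian to get a derivative bounded away from zero, and the reading of $D_{b}$ as $\partial_{t}$ at fixed $y$ to conclude reality of $u$ — are small strengthenings of steps the paper treats as straightforward.
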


\begin{proof}
Let us check that under the conditions of the proposition, the function $u(x,t)$ (\ref{sol u CH}) is real-valued. Let us fix $y,t\in\R$. First of all, invariance with respect to the anti-involution $\tau$ of the points $e$ and $b$ implies 
\begin{equation}
\overline{\mathbf{Z}}=-\,\mathbf{Z}, \label{real Z CH}
\end{equation}
where the vector $\mathbf{Z}$ is defined in (\ref{Z CH}).
In fact, using the expansion (\ref{exp hol diff}) of the normalized holomorphic differentials $\omega_{j}$ near $c\in\{e,b\}$, one gets
\[\overline{\tau^{*} \omega_{j}}(c)(p)=\left(\,\overline{V_{c,j}}+\overline{W_{c,j}}\,k_{c}(p)+o\left(k_{c}(p)^{2}\right)\right)\ud k_{c}(p),\]
for any point $p$ in a neighborhood of $c$.
Then by (\ref{diff hol}), the vectors $\mathbf{V}_{e}$ and $\mathbf{V}_{b}$ appearing in the vector $\mathbf{Z}$ are purely imaginary,
which leads to (\ref{real Z CH}). 
Moreover, since the closed contour $\tau \ell-\ell$ is homologous to zero in $H_{1}(\Rs_{g})$, from (\ref{diff hol}) one gets
\begin{equation}
\overline{\mathbf{r}}=-\,\mathbf{r}. \label{r CH}
\end{equation}
For arbitrary points $a_{1},a_{2},a_{3}\in\Rs_{g}$, using the representation of the differential $\Omega_{a_{3}-a_{1}}$  in terms 
of multi-dimensional theta functions (see, for instance, \cite{BBEIM}),  one gets:
\begin{equation}
\frac{\Omega_{a_{3}-a_{1}}(p)}{\ud k_{a_{2}}(p)}\Big|_{p=a_{2}}=p_{1}(a_{1},a_{2},a_{3}). \label{p1 Fay}
\end{equation} 
We deduce that the scalars $p_{1}(b,e,a)$ and $p_{1}(e,b,a)$ 
appearing respectively in $\alpha_{1}$ and $\alpha_{2}$ (see 
(\ref{alpha CH})) satisfy
\begin{equation}
\frac{\Omega_{a-b}(p)}{\ud k_{e}(p)}\Big|_{p=e}=p_{1}(b,e,a), \qquad   \frac{\Omega_{a-e}(p)}{\ud k_{b}(p)}\Big|_{p=b}=p_{1}(e,b,a).\label{pi diff}
\end{equation}
Therefore, since the points $a,b,e$ are stable under $\tau$,  from (\ref{diff 3kind real}) it can be deduced that $p_{1}(b,e,a)$ and $p_{1}(e,b,a)$ are real, which involves
\begin{equation}
\overline{\alpha_{1}}=\alpha_{1}, \qquad \overline{\alpha_{2}}=\alpha_{2}.  \label{conj alpha}
\end{equation}
Let $y,t\in\R$ and denote by $h$ the function
\begin{equation}
h(y,t)= \frac{\Theta\hspace{-2pt}\left(\mathbf{Z}-\mathbf{d}+\frac{\mr}{2}\right)}{\Theta\hspace{-2pt}\left(\mathbf{Z}-\mathbf{d}-\frac{\mr}{2}\right)}. \label{fct h}
\end{equation} 
By (\ref{implicit equ}), $x$ is a real-valued function of the real variables $y$ and $t$ if the function $h$ is real and has a constant sign, and if we choose $\text{Im}(\zeta)=\arg\{\ln(h(0,0))\}$. From (\ref{conj theta}), (\ref{real Z CH}) and (\ref{r CH}) we deduce that
\begin{equation}
\overline{ h(y,t)}
=\frac{\Theta\hspace{-2pt}\left(\mathbf{Z}+\overline{\mathbf{d}}+\frac{\mr}{2}+\mathrm{i}\pi\,\text{diag}(\mathbb{H})\right)}{\Theta\hspace{-2pt}\left(\mathbf{Z}+\overline{\mathbf{d}}-\frac{\mr}{2}+\mathrm{i}\pi\,\text{diag}(\mathbb{H})\right)}. \label{real1 CH}
\end{equation}
Let us choose a vector $\mathbf{d}\in \C^{g}$ such that
\begin{equation}
 \overline{\mathbf{d}}= -\,\mathbf{d}- \mathrm{i}\pi \,\text{diag}(\mathbb{H}) + 2\mathrm{i}\pi\mathbf{T}+\B\mathbf{L} \label{real d1 CH}  \nonumber
\end{equation}
for some vectors $\mathbf{T},\mathbf{L}\in\Z^{g}$. Reality of the 
vector  $\overline{\mathbf{d}}+\mathbf{d}$ together with (\ref{matrix 
B}) implies 
\begin{equation}
\mathbf{d}=\frac{1}{2}\,\mathrm{Re}(\mathbb{B})\,\mathbf{L}+\mathrm{i}\,\mathbf{d}_{I} \label{real d2 CH}
\end{equation}
for some $\mathbf{d}_{I}\in\R^{g}$ and the relation $2\,\mathbf{T}+ 
\mathbb{H}\mathbf{L}=\text{diag}(\mathbb{H})$ for  $\mathbf{T}$ and $\mathbf{L}$.
For this choice of the vector $\mathbf{d}$, (\ref{real1 CH}) becomes
\begin{equation}
\overline{ h(y,t)}
=\frac{\Theta\hspace{-2pt}\left(\mathbf{Z}-\mathbf{d}+\frac{\mr}{2}\right)}{\Theta\hspace{-2pt}\left(\mathbf{Z}-\mathbf{d}-\frac{\mr}{2}\right)}\,\exp\{-\left\langle \mr,\mathbf{L}\right\rangle\}, \nonumber
\end{equation}
where we used the quasi-periodicity property  (\ref{per theta}) of the theta function. Therefore, the function $h$ is real if $\mathbf{L}=0$, that is $\mathbf{d}\in\mathrm{i}\R^{g}$. Now let us check that $h$ has a constant sign with respect to $y,t\in\R$. Since $\mathbf{Z}-\mathbf{d}\pm\frac{\mr}{2}\in\mathrm{i}\R^{g}$, by Proposition \ref{prop div S2} the functions $\Theta(\mathbf{Z}-\mathbf{d}\pm\frac{\mr}{2})$ of the real variables $y$ and $t$ do not vanish if the hyperelliptic curve is an M-curve, i.e., if all branch points in (\ref{HypCH}) are real.
Hence $h$ is a real continuous non vanishing function with respect to the real variables $y$ and $t$, which means it has a constant sign. Therefore, $x$ is a real-valued function of $y$ and $t$ if the constant $\zeta$ in (\ref{implicit equ}) is chosen such that   $\text{Im}(\zeta)=\arg\{\ln(h(0,0))\}$.
It is straightforward to see that the solution $u$ (\ref{sol u CH}) is a real-valued function of the real variables $y$ and $t$, and then is real-valued with respect to the real variables $x$ and $t$.

Now fix $t_{0}\in\R$ and let us study smoothness conditions for the 
function $u(x,t_{0})$ with respect to the variable $x$. First let us 
check that the solution $u$ 
(\ref{sol u CH}) is a smooth function of the real variable $y$ 
if it does not have singularities. Since the theta 
function is entire, singularities of the solution $u$ are located at 
the zeros of its denominator. As we saw in the previous paragraph,  if the curve is an M-curve and $\mathbf{d}\in \mathrm{i}\R^{g}$, the functions $\Theta(\mathbf{Z}-\mathbf{d}\pm\frac{\mr}{2})$ and $\Theta(\mathbf{Z}-\mathbf{d})$ do not vanish. In this case, the function $u$ is smooth with respect to the real variable $y$. Now let us prove that $u$ is smooth with respect to the real variable $x$. By (\ref{implicit equ}), the function $x(y)$ is smooth. Moreover, it can be seen from (\ref{implicit equ}) and (\ref{proof1CH}) that
\begin{equation}
x_{y}(y)=\frac{p_{2}}{g_{1}(\mathbf{Z}-\mathbf{d})\,g_{2}(\mathbf{Z}-\mathbf{d})}=p_{2}\,\frac{\Theta(\mathbf{Z}-\mathbf{d})^{2}}{\Theta\hspace{-2pt}\left(\mathbf{Z}-\mathbf{d}+\frac{\mr}{2}\right)\Theta\hspace{-2pt}\left(\mathbf{Z}-\mathbf{d}-\frac{\mr}{2}\right)}.   \label{x_y real}
\end{equation}
Since the functions $\Theta(\mathbf{Z}-\mathbf{d}\pm\frac{\mr}{2})$ and $\Theta(\mathbf{Z}-\mathbf{d})$ do not vanish, we deduce that $x(y)$ is a strictly monotonic real function, and thus the inverse function $y(x)$ has the same property. Therefore, the function $u(x,t_{0})=u(y(x))$ is a smooth real-valued  function with respect to the real variable $x$.
\end{proof}

\noindent
Now let us study real-valuedness and smoothness of the solutions in the case where $\tau a=b$.

\begin{proposition} Assume that $\Rs_{g}$ is a hyperelliptic M-curve of genus $g$,  and denote by $e\in\Rs_{g}(\R)$ one of its ramification points. Let $a,b\in\Rs_{g}$ such that $\sigma(a)=b$ and assume that $\tau a=b$. 
Choose local parameters such that $\overline{k_{b}(\tau p)}=k_{a}(p)$ for any point $p$ in a neighborhood of $a$, and $\overline{k_{e}(\tau 
p)}=-\,k_{e}(p)$ for any $p$ lying in a neighborhood of $e$.
Denote by $\ell$ an oriented contour between $a$ and $b$ containing point $e$, which does not intersect cycles of the 
canonical homology basis. Assume that $\mathbf{N}=2\mathbf{L}$ for some
$\mathbf{L}\in \Z^{g}$, where $\mathbf{N}\in \Z^{g}$ is defined in (\ref{hom basis 1}). Take $k\in\R$ and define 
$ \mathbf{d}=\mathbf{d}_{R}+\frac{\mathrm{i}\pi}{2}\,\mathbf{N}$ 
 for some $\mathbf{d}_{R}\in\R^{g}$. Choose $\zeta\in\C$ in (\ref{implicit equ}) such that
$\text{Im}(\zeta)=\arg\left\{\ln\left(\frac{\Theta(\mathbf{d}+\mr/2)}{\Theta(\mathbf{d}-\mr/2)}\right)\right\}$. Then solutions $u$ (\ref{sol u CH}) 
of the  CH equation are real-valued. Moreover, for fixed $t_{0}\in\R$, the function $u(x,t_{0})$ is smooth with respect to the real variable $x$ in the case where $\mathbf{N}=0$, otherwise it has   
an infinite number of singularities of the type 
$O\left((x-x_{0})^{\frac{2n}{2n+1}}\right)$ for some 
$n\in\N\setminus\{0\}$ and $x_{0}\in\R$, i.e.,
cusps.  \label{prop real sol CH2}
\end{proposition}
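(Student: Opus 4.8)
The plan is to follow the architecture of the proof of Proposition~\ref{prop real sol CH1}, replacing the conjugation relations valid for $\tau$-stable points by those appropriate to the swapped pair $\tau a=b$, and then to read off the singularity structure from the zeros of the numerator of $x_y$. First I would record the action of $\tau$ on the data entering $\mathbf{Z}$ and $\mr$. With the normalization $\overline{k_{e}(\tau p)}=-k_{e}(p)$ at the fixed ramification point $e$, expanding (\ref{exp hol diff}) and using (\ref{diff hol}) gives $\overline{\mathbf{V}_{e}}=\mathbf{V}_{e}$. At the pair $a,b$, the condition $\overline{k_{b}(\tau p)}=k_{a}(p)$ together with (\ref{diff hol}) yields $\overline{\mathbf{V}_{b}}=-\mathbf{V}_{a}$; since $\sigma a=b$ and $\sigma^{*}\omega_{j}=-\omega_{j}$ (cf. (\ref{sig})) force $\mathbf{V}_{a}=-\mathbf{V}_{b}$ in the common hyperelliptic parameter $\lambda-\lambda(a)$ (which is admissible because $\tau a=b$ and $\sigma a=b$ imply $\sigma\tau a=a$, forcing $\lambda(a)\in\R$), I obtain $\overline{\mathbf{V}_{b}}=\mathbf{V}_{b}$. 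Hence $\mathbf{V}_{e},\mathbf{V}_{b}$ are real and $\overline{\mathbf{Z}}=\mathbf{Z}$ for real $y,t$ --- in contrast to $\overline{\mathbf{Z}}=-\mathbf{Z}$ in Proposition~\ref{prop real sol CH1}. Applying $\tau$ to $\ell$ via (\ref{hom basis 1}) gives $\int_{\tau\ell}\omega=2\mathrm{i}\pi\mathbf{N}-\mr$, while $\int_{\tau\ell}\omega=-\overline{\mr}$ by (\ref{diff hol}); therefore $\overline{\mr}=\mr-2\mathrm{i}\pi\mathbf{N}$, i.e. $\mathrm{Im}(\mr)=\pi\mathbf{N}$. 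Finally, reality of $\alpha_{1},\alpha_{2}$ follows as in Proposition~\ref{prop real sol CH1} from (\ref{p1 Fay}), now using the action of $\tau$ on third-kind differentials with a swapped pole: (\ref{diff 3kind}) gives $\overline{\alpha_{1}}=\alpha_{1}$ directly, and the analogue for $\Omega_{a-e}$ (poles at the swapped point $a$ and the fixed point $e$) gives $\overline{\alpha_{2}}=\alpha_{2}$.

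For the reality of $u$, since the curve is an M-curve one has $\mathbb{H}=0$, hence $\mathrm{diag}(\mathbb{H})=0$, and by (\ref{matrix B}) the matrix $\B$ is real, so (\ref{conj theta}) reads $\overline{\Theta(\z)}=\kappa\,\Theta(\overline{\z})$. Inserting $\mathbf{d}=\mathbf{d}_{R}+\tfrac{\mathrm{i}\pi}{2}\mathbf{N}$ and $\overline{\mr}=\mr-2\mathrm{i}\pi\mathbf{N}$ into $h$ of (\ref{fct h}), a short computation using $\overline{\mathbf{Z}}=\mathbf{Z}$ shows that the argument of the conjugated numerator is exactly $\mathbf{Z}-\mathbf{d}+\mr/2$, and that of the conjugated denominator is $\mathbf{Z}-\mathbf{d}-\mr/2+2\mathrm{i}\pi\mathbf{N}$, a full $\mathcal{A}$-period shift; hence $\overline{h}=h$ and $h$ is real. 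To see that $h$ keeps a constant sign, note that with $\mathbf{N}=2\mathbf{L}$ the two arguments $\mathbf{Z}-\mathbf{d}\pm\mr/2$ reduce, modulo $2\mathrm{i}\pi\mathbf{N}$, to real vectors lying in the torus $T_{0}$ of (\ref{3.11}); by Proposition~\ref{prop div S1} the theta divisor does not meet $T_{0}$ on a dividing M-curve, so $\Theta(\mathbf{Z}-\mathbf{d}\pm\mr/2)$ never vanish and $h$ is a continuous nonvanishing real function. Choosing $\mathrm{Im}(\zeta)$ as prescribed then makes $x$ a real function of $y,t$, and $u$ in (\ref{sol u CH}) is real-valued. (Here the relevant torus is $T_{0}\subset S_{1}$, whereas Proposition~\ref{prop real sol CH1} used $\tilde T_{0}\subset S_{2}$.)

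For smoothness, exactly as in (\ref{x_y real}) one has $x_{y}=p_{2}\,\Theta(\mathbf{Z}-\mathbf{d})^{2}/[\Theta(\mathbf{Z}-\mathbf{d}+\mr/2)\,\Theta(\mathbf{Z}-\mathbf{d}-\mr/2)]$, whose denominator was just shown to be nonzero; thus $x(y)$ can fail to be a local diffeomorphism only where $\Theta(\mathbf{Z}-\mathbf{d})$ vanishes. The vector $\mathbf{Z}-\mathbf{d}$ has imaginary part $-\tfrac{\pi}{2}\mathbf{N}=-\pi\mathbf{L}$, so it lies in the torus $T_{v}$ with $v\equiv\mathbf{L}\ (\mathrm{mod}\ 2)$. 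If $\mathbf{N}=0$ this is $T_{0}$, which misses the divisor, so $x_{y}\neq0$, $x(y)$ is strictly monotonic, and $u(x,t_{0})$ is smooth, as in Proposition~\ref{prop real sol CH1}. If $\mathbf{N}\neq0$ (with $\ell$, equivalently $\mathbf{L}$, reduced modulo $2$ so that $v\neq0$) Proposition~\ref{prop div S1} gives $T_{v}\cap(\Theta)\neq\emptyset$; since $y\mapsto\mathbf{Z}(y)-\mathbf{d}$ is a line in the fixed real direction $\mathbf{V}_{e}$ inside $T_{v}$, it is recurrent and meets the divisor at infinitely many isolated values $y_{0}$, giving infinitely many singularities.

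At such a $y_{0}$, let $n$ be the multiplicity of the zero of $\Theta$ at $\mathbf{Z}(y_{0})-\mathbf{d}$, so $\Theta(\mathbf{Z}(y)-\mathbf{d})\sim c\,(y-y_{0})^{n}$ and, by the displayed formula for $x_{y}$, $x_{y}\sim(y-y_{0})^{2n}$, whence $x-x_{0}\sim(y-y_{0})^{2n+1}$. For $u$, differentiating (\ref{sol u CH}) along $\mathbf{V}_{e}$ and using (\ref{cor1CH}) gives $u_{y}=-\,p_{2}\,(g_{1}g_{2})^{-1}D_{b}\ln(g_{1}g_{2})$; because at a point of multiplicity $n$ on $(\Theta)$ all first derivatives of $\Theta$ vanish to order $n-1$ along the line, one has $D_{b}\ln(g_{1}g_{2})\sim(y-y_{0})^{-1}$ while $(g_{1}g_{2})^{-1}\sim(y-y_{0})^{2n}$, so $u_{y}\sim(y-y_{0})^{2n-1}$ and $u-u_{0}\sim(y-y_{0})^{2n}$. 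Combining with $x-x_{0}\sim(y-y_{0})^{2n+1}$ yields $u-u_{0}\sim(x-x_{0})^{2n/(2n+1)}$, a cusp of the stated type. I expect the delicate point of the whole argument to be precisely this local computation: one must verify that the transversal derivative $D_{b}\Theta$ degenerates to order $n-1$ (not order $0$) at a multiplicity-$n$ point, since it is this cancellation that turns the naive exponent into $2n/(2n+1)$; the secondary subtleties are the bookkeeping in the first step that makes $\alpha_{2}$ real, and the reduction of $\ell$ ensuring that $\mathbf{N}\neq0$ corresponds to a nontrivial characteristic $v\neq0$.
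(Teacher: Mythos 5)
Your proof follows essentially the same route as the paper's: the same conjugation relations for $\mathbf{V}_{b},\mathbf{V}_{e},\mr,\alpha_{1},\alpha_{2}$, reality of $h$ via the prescribed choice of $\mathbf{d}$ and $\overline{\mr}=\mr-2\mathrm{i}\pi\mathbf{N}$, non-vanishing of $\Theta(\mathbf{Z}-\mathbf{d}\pm\mr/2)$ via Proposition \ref{prop div S1} applied to the torus $T_{0}\subset S_{1}$, and the cusp exponent $\tfrac{2n}{2n+1}$ obtained by comparing the vanishing orders $2n$ of $x_{y}$ and $2n-1$ of $u_{y}$ at a multiplicity-$n$ zero of $\Theta(\mathbf{Z}-\mathbf{d})$. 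Your two side remarks (that one additionally needs $v\equiv\mathbf{L}\not\equiv 0 \pmod 2$ and recurrence of the line in $T_{v}$ for the divisor actually to be met infinitely often) are points the paper itself glosses over, so they do not constitute a deviation from its argument.
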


\begin{proof}
Analogously to the case where $a$ and $b$ are stable under $\tau$, 
let us prove that solutions $u$ (\ref{sol u CH}) are real-valued.  Fix $y,t\in\R$.
First let us check that the vector $\mathbf{Z}$ (\ref{Z CH}) satisfies:
\begin{equation}
\overline{\mathbf{Z}}=\mathbf{Z}. \label{vectZ CH conj}
\end{equation}
From (\ref{exp hol diff}) and (\ref{diff hol}) one gets $\overline{\mathbf{V}_{a}}=-\mathbf{V}_{b}$ and $\overline{\mathbf{V}_{e}}=\mathbf{V}_{e}$. 
Moreover, the vectors $\mathbf{V}_{a}$ and $\mathbf{V}_{b}$ satisfy 
$\mathbf{V}_{a}+\mathbf{V}_{b}=0$ because of  (\ref{sig}); thus we 
have  $\overline{\mathbf{V}_{b}}=\mathbf{V}_{b}$ and $\overline{\mathbf{V}_{e}}=\mathbf{V}_{e}$ 
which proves (\ref{vectZ CH conj}).
By (\ref{diff 3kind}) and (\ref{pi diff}) it can be deduced that $\alpha_{1}$ and $\alpha_{2}$ (\ref{alpha CH}) satisfy 
\begin{equation}
\overline{\alpha_{1}}=\alpha_{1}, \qquad \overline{\alpha_{2}}=\alpha_{2}.  \label{alpha conj bis}
\end{equation}
Moreover, from (\ref{diff hol}) and (\ref{hom basis 1}) one gets:
\begin{equation}
\overline{\mathbf{r}}=\mathbf{r}- 2\mathrm{i}\pi \mathbf{N}, \label{r CH bis}
\end{equation}
where $\mathbf{N}\in\Z^{g}$ is defined in (\ref{hom basis 1}).
Let us check that $x$ (\ref{implicit equ}) is a real-valued function 
of the real variables $y$ and $t$. By (\ref{alpha conj bis}), this holds if the function $h$ (\ref{fct h}) is real and has a constant sign with respect to the real variables $y$ and $t$, and if we choose $\text{Im}(\zeta)=\arg\{\ln(h(0,0))\}$.  
By (\ref{conj theta}), (\ref{vectZ CH conj}) and (\ref{r CH bis}) it follows that
\begin{equation}
\overline{h(y,t)}=\frac{\Theta\hspace{-2pt}\left(\mathbf{Z}-\overline{\mathbf{d}}+\frac{\mr}{2}+\mathbf{p}\right)}{\Theta\hspace{-2pt}\left(\mathbf{Z}-\overline{\mathbf{d}}-\frac{\mr}{2}+\mathbf{p}\right)}, \label{real1 CH bis}
\end{equation}
where $\mathbf{p}=-\,\mathrm{i}\pi\mathbf{N}-\mathrm{i}\pi 
\,\text{diag}(\mathbb{H})$. Let us choose the vector $\mathbf{d}\in\C^{g}$ such that
\[\overline{\mathbf{d}}\equiv\mathbf{d}+\mathbf{p} \pmod{2\mathrm{i}\pi \Z^{g}+ \mathbb{B}\Z^{g}}, \]
which is, since $\overline{\mathbf{d}}-\mathbf{d}$ and $\mathbf{p}$ are purely imaginary, equivalent to $\overline{\mathbf{d}}= \mathbf{d}+\mathbf{p} + 2\mathrm{i}\pi\mathbf{T},$ for some $\mathbf{T}\in\Z^{g}$. 
Here we used the action (\ref{matrix B}) of the complex conjugation on the Riemann matrix $\B$, and the fact that $\B$ has a negative definite real part.
 Hence, the vector $\mathbf{d}$ can be written as 
\begin{equation}
 \mathbf{d}=\mathbf{d}_{R}+\frac{\mathrm{i}\pi}{2}(\mathbf{N}+\text{diag}(\mathbb{H})-2\,\mathbf{T}), \label{real d1 CH bis}
\end{equation}
 for some $\mathbf{d}_{R}\in\R^{g}$ and $\mathbf{T}\in\Z^{g}$.
For this choice of the vector  $\mathbf{d}$, by (\ref{real1 CH bis}) the function $h$ is a real-valued function of the real variables $y$ and $t$. Now let us study in which cases the function $h$ has a constant sign. The sign of the function $h$ is constant with respect to $y$ and $t$ if the functions $\Theta(\mathbf{Z}-\mathbf{d}\pm\frac{\mr}{2})$ do not vanish.
By (\ref{vectZ CH conj}), (\ref{r CH bis}) and (\ref{real d1 CH bis}), the vectors $\mathbf{Z}-\mathbf{d}\pm\frac{\mr}{2}$ belong 
to the set $S_{1}$ introduced in (\ref{S1}). Hence by Proposition \ref{prop div S1}, the functions $\Theta(\mathbf{Z}-\mathbf{d}\pm\frac{\mr}{2})$ do not vanish if the hyperelliptic curve is 
dividing (in this case $\text{diag}(\mathbb{H})=0$), and if the 
arguments $\mathbf{Z}-\mathbf{d}\pm\frac{\mr}{2}$ in the theta 
function are real (modulo $2\mathrm{i}\pi\Z^{g}$). The vector  
$\mathbf{Z}-\mathbf{d}+\frac{\mr}{2}$ is real if $\mathbf{T}=0$ in 
(\ref{real d1 CH bis}). With this choice of the vector $\mathbf{T}$, 
the imaginary part of the vector 
$\mathbf{Z}-\mathbf{d}-\frac{\mr}{2}$ equals 
$-\mathrm{i}\pi\mathbf{N}$. Therefore, the vector  
$\mathbf{Z}-\mathbf{d}-\frac{\mr}{2}$ is real modulo 
$2\mathrm{i}\pi\Z^{g}$ if all components of the vector $\mathbf{N}$ 
are even. To summarize, the function $h(y,t)$ defined in (\ref{fct 
h}) is a real-valued function with constant sign if the hyperelliptic 
curve is dividing (i.e., all ramification points are stable under 
$\tau$, since the ramification point $e$ is stable under $\tau$ and 
since dividing curves have either only real branch points, or 
pairwise conjugate ones), if $\mathbf{T}=0$ and $\mathbf{N}=2\mathbf{L}$ for some $\mathbf{L}\in\Z^{g}$, where vector $\mathbf{N}\in\Z^{g}$ is defined in (\ref{hom basis 1}). Analogously to the proof of Proposition \ref{prop real sol CH1}, we conclude that $x$ is a real-valued continuous function of the real variables $y$ and $t$, and thus solutions $u(x,t)$ (\ref{sol u CH}) are real-valued functions of the real variables $x$ and $t$.

Now let us study smoothness conditions for fixed $t_{0}\in\R$. Notice 
that the function $u(y)$ (\ref{sol u CH}) is a smooth function of the real variable  $y$ since the denominator does not vanish, as we have seen before. Put $\z=\mathbf{Z}-\mathbf{d}$. Let us consider the function $x_{y}(y)$ given in (\ref{x_y real}) in both cases: $\mathbf{N}=0$ and $\mathbf{N}\neq 0$.\\
- If $\mathbf{N}=0$, the function $x_{y}(y)$ does not vanish, since in this case $\z\in \R^{g}$ which implies that the function $\Theta(\z)$ does not vanish.
Hence, analogously to the case where $a$ and $b$ 
are stable under $\tau$, for fixed $t_{0}\in\R$, the function $u(x,t_{0})$ is smooth with respect to the real variable $x$. 
\\
- If $\mathbf{N}\neq 0$, the function $\Theta(\z)$ vanishes when $\z$ 
belongs to the theta divisor. Fix $x_{0},t_{0}\in\R$ and denote by $\z_{0}$ and $y_{0}$ 
the corresponding values of $\z$ and $y$. Assume that 
$\z_{0}$ is a zero of the theta function of order $n\geq 1$. Then by 
(\ref{x_y real}), the function $x_{y}(y)$ has a zero at $y_{0}$ of order $2n$. It follows that function $x(y)-x(y_{0})$ has a zero of order $2n+1$ at $y_{0}$, and then
\begin{equation}
y(x)-y_{0}=O\left((x-x_{0})^{\frac{1}{2n+1}}\right).  \label{y(x)}
\end{equation}
On the other hand, it can be seen from (\ref{cor1CH}) that
\begin{equation}
u_{y}(y)=p_{2}\,\frac{\Theta(\z)}{\Theta\hspace{-2pt}\left(\z+\frac{\mr}{2}\right)\Theta\hspace{-2pt}\left(\z-\frac{\mr}{2}\right)}\,\left[\Theta(\z)\,\psi(\z)-2\,D_{b}\Theta(\z)\right], \label{u_{y}(y)}
\end{equation}
where 
$\psi(\z)=D_{b}\ln\left(\Theta(\z+\tfrac{\mr}{2})\,\Theta(\z-\tfrac{\mr}{2})\right)$. Identity (\ref{u_{y}(y)}) implies that function $u_{y}(y)$ has a zero at $y_{0}$ of order $2n-1$, namely,
\begin{equation}
u(y)-u(y_{0})=O\left((y-y_{0})^{2n}\right). \label{u(y) zero}
\end{equation}
Finally with (\ref{y(x)}) and (\ref{u(y) zero}), the 
function $u(x,t_{0})$ has an infinite number of singularities of the 
type $O\left((x-x_{0})^{\frac{2n}{2n+1}}\right)$, i.e., cusps. 
\end{proof}

\section{Numerical study of algebro-geometric solutions to the 
Camassa-Holm equation}
In this section we will numerically study concrete examples for the 
CH solutions (\ref{sol u CH}). As shown in the previous sections, 
real and bounded solutions are obtained on hyperelliptic M-curves, 
i.e., curves of the form
$$\mu^{2}=
  \prod_{i=1}^{2g+2}(\lambda-\lambda_{i}),
$$
where $g$ 
is the genus of the Riemann surface, and where we have for the branch 
points $\lambda_{i}\in \mathbb{R}$ the relations 
$\lambda_{i}\neq\lambda_{j}$  for $i\neq j$.

For the numerical evaluation  of the CH solutions (\ref{sol u CH}) we use the code presented in 
\cite{cam,lmp} for real hyperelliptic Riemann surfaces. The reader is 
referred to these publications for details. The basic idea is to 
introduce  a convenient homology basis 
on the related surfaces, see 
Fig.~\ref{cutsystem}.
\begin{figure}[htb!]
\begin{center}
   \includegraphics[width=0.7\textwidth]{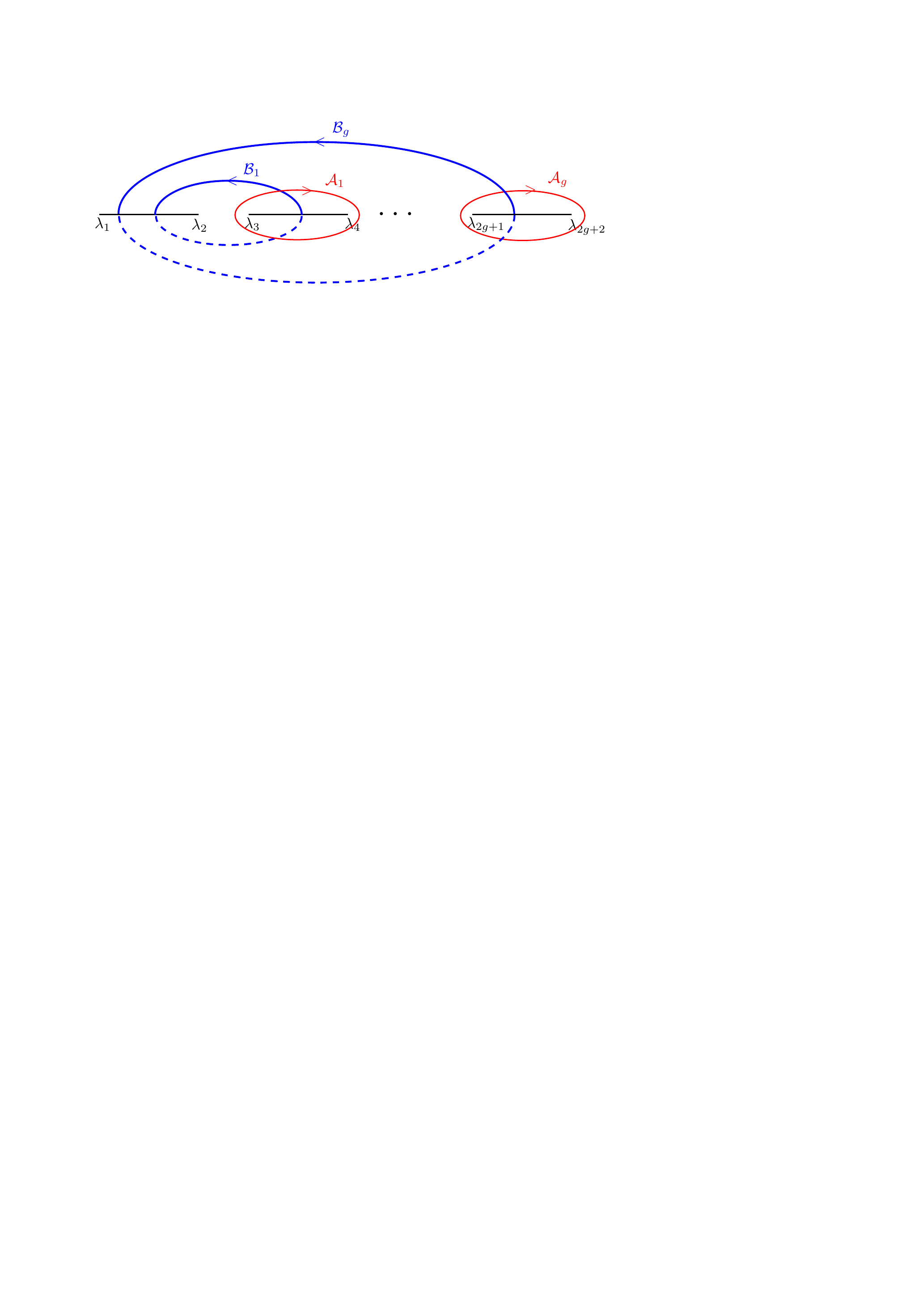}
\end{center}
 \caption{\textit{Homology basis on real hyperelliptic 
 M-curves, contours on sheet 1 are solid, contours on sheet 2 are dashed.}}
   \label{cutsystem}
\end{figure}
It is related to the basis used in the previous sections by the 
simple relation $\mathcal{A}\to -\mathcal{B}$, $\mathcal{B}\to 
\mathcal{A}$. This choice of the homology basis 
has the advantage that the limit in which branch points encircled by 
the same $\mathcal{A}$-cycle collide can be treated essentially numerically. 
Below we will consider examples where the distance of such a pair of branch points is 
of the order of machine precision ($10^{-14}$) and thus 
numerically zero. This limit is interesting since the $\mathcal{B}$-periods 
diverge, which implies that the corresponding theta functions reduce 
to elementary functions. The CH solutions (\ref{sol u CH}) reduce in 
this case to solitons or cuspons. Since we want to study also this 
limit numerically, we use the homology basis of Fig.~\ref{cutsystem}, 
and not the one of the previous sections.

The sheets are identified at the point $a$ by the sign of the root 
picked by Matlab. We denote a point in the first sheet with projection $\lambda$ into 
the complex plane by $\lambda^{(1)}$, and a point 
in the second sheet with the same projection by $\lambda^{(2)}$. 
The theta functions are in general approximated numerically by a 
truncated series as explained in \cite{cam} and \cite{BK}. A vector of 
holomorphic differentials for these  surfaces is given by 
$(1,\lambda,\ldots,\lambda^{g-1})^{t}\ud\lambda/\mu$.  The periods of 
the surface are computed as integrals between branch points of 
these differentials as detailed in \cite{lmp}. The Abel map of the 
point $a$ (and analogously for $b$) is computed in a similar way, see \cite{CK}, as 
the integral between $a$ and the branch point with minimal distance 
to $a$. It is well known (see for instance \cite{BBEIM}) that the 
Abel map between two branch points is a half period. 

To control the accuracy of the numerical solutions, we use 
essentially two approaches. First we check the theta identity 
(\ref{cor Fay1}), which is the underlying reason for the studied functions being 
solutions to CH. Since this identity is not built into the code, it 
provides a strong test. This check for various combinations of the 
points $a,b$ and $e$ ensures that the theta functions are computed 
with sufficient precision, and that the quantities $\alpha_{1}$ and 
$\alpha_{2}$ in (\ref{implicit equ}) are known with the wanted precision 
(we always use machine precision here). 
In addition, the smooth solutions are computed on Chebyshev 
collocation points (see, for instance, \cite{tref}) for $x$ and $t$. 
This can be used to approximate the 
computed solution via Chebyshev polynomials, a so-called 
spectral method having in 
practice exponential convergence for smooth functions. Since the 
derivatives of the Chebyshev polynomials can be expressed linearly in terms of Chebyshev 
polynomials, a derivative acts on the space of polynomials via a so 
called differentiation matrix. With these standard Chebyshev differentiation 
matrices (see \cite{tref}), the solution can be numerically 
differentiated. The computed derivatives allow to check with which 
numerical precision the partial differential equation (PDE) is satisfied by a numerical solution. With these two independent tests, we ensure that the 
shown solutions are correct to much better than plotting accuracy 
(the code reports a warning if the above tests are not satisfied to 
better than $10^{-6}$). We do not use the expansion in terms of 
Chebyshev polynomials for the cusped solutions since the convergence 
is slow for functions with cusps. 

We first consider smooth solutions $u$ (\ref{sol u CH}) in the case $\tau a 
=a$, $\tau b=b$. To obtain non-trivial solutions in the solitonic limit, 
we use  a vector $\mathbf{d}$ corresponding to the characteristics $\frac{1}{2}
\left[\begin{smallmatrix}
   1 & \ldots & 1  \\
   0 & \ldots & 0
\end{smallmatrix}\right]^{t}
$ in all examples. To plot a solution $u$ in dependence of $x$ and 
$t$, we compute it on a numerical grid for $y$ and $t$ to obtain 
$x(y,t)$ defined in (\ref{implicit equ}) and $u(y,t)$ given by (\ref{sol u CH}). These are then used to obtain a plot of 
$u(x,t)$ without having to solve the implicit relation  
(\ref{implicit equ}). In all examples we have $k=1$. The solutions for 
$\tau a=b$, e.g., for points $a$ and $b$ on the cuts encircled by the 
$\mathcal{A}$-periods in Fig.~\ref{cutsystem} look very similar and 
are therefore not shown here. 

Solutions on elliptic surfaces describe travelling waves and will not 
be discussed here. In genus 2 we obtain CH solutions of the form shown in 
Fig.~\ref{figch2}. The typical soliton collision known from the KdV 
equation is also present here, the unchanged shape of the solitons 
after the collision, but an asymptotic change of phase. 
\begin{figure}[htb!]
\begin{center}
\includegraphics[width=0.45\textwidth]{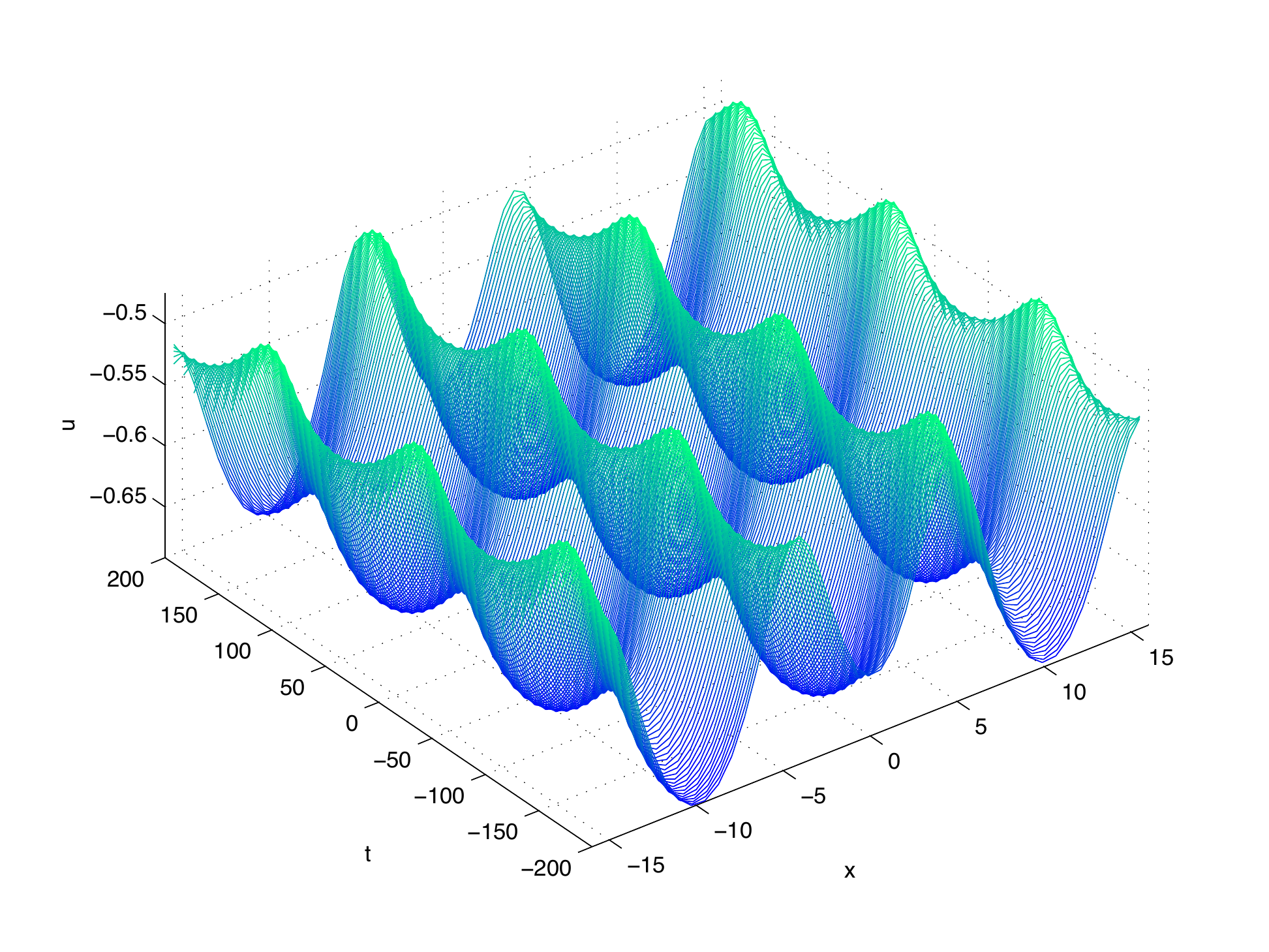}
\includegraphics[width=0.45\textwidth]{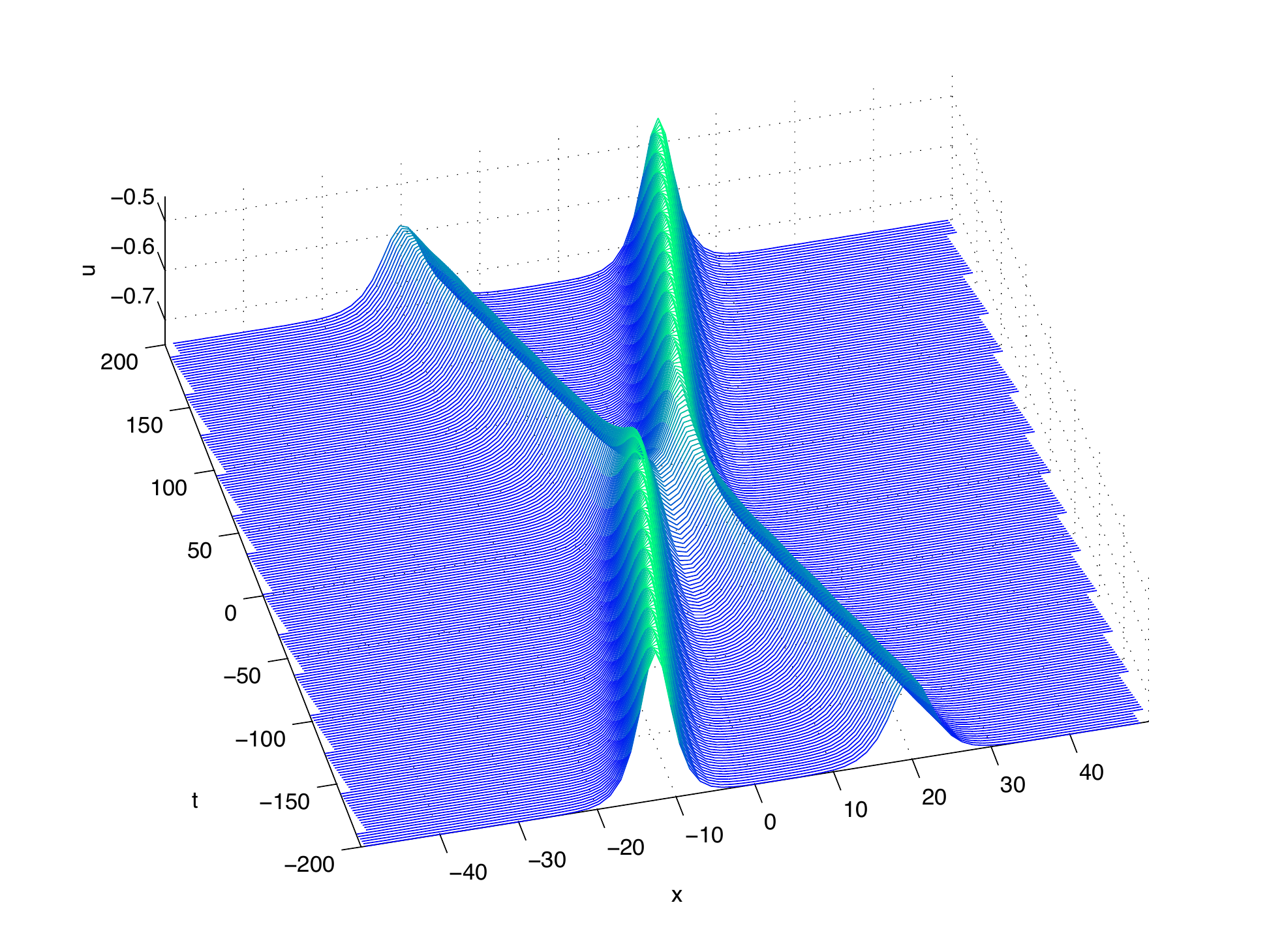}
\end{center}
 \caption{\textit{Solution (\ref{sol u CH}) to the CH equation  
 on a hyperelliptic curve of 
 genus 2 with branch points $-3,-2,0,\epsilon,2,2+\epsilon$ and 
 $a=(-4)^{(1)}$, $b=(-4)^{(2)}$ and $e=(-3,0)$ for $\epsilon=1$ on the left and 
 $\epsilon=10^{-14}$, the almost solitonic limit, on the right.}}
   \label{figch2}
\end{figure}

In genus 6 the CH solutions  have the form shown in 
Fig.~\ref{figch6}. In the solitonic limit one can recognize a 6 
soliton event. 
\begin{figure}[htb!]
\begin{center}
\includegraphics[width=0.45\textwidth]{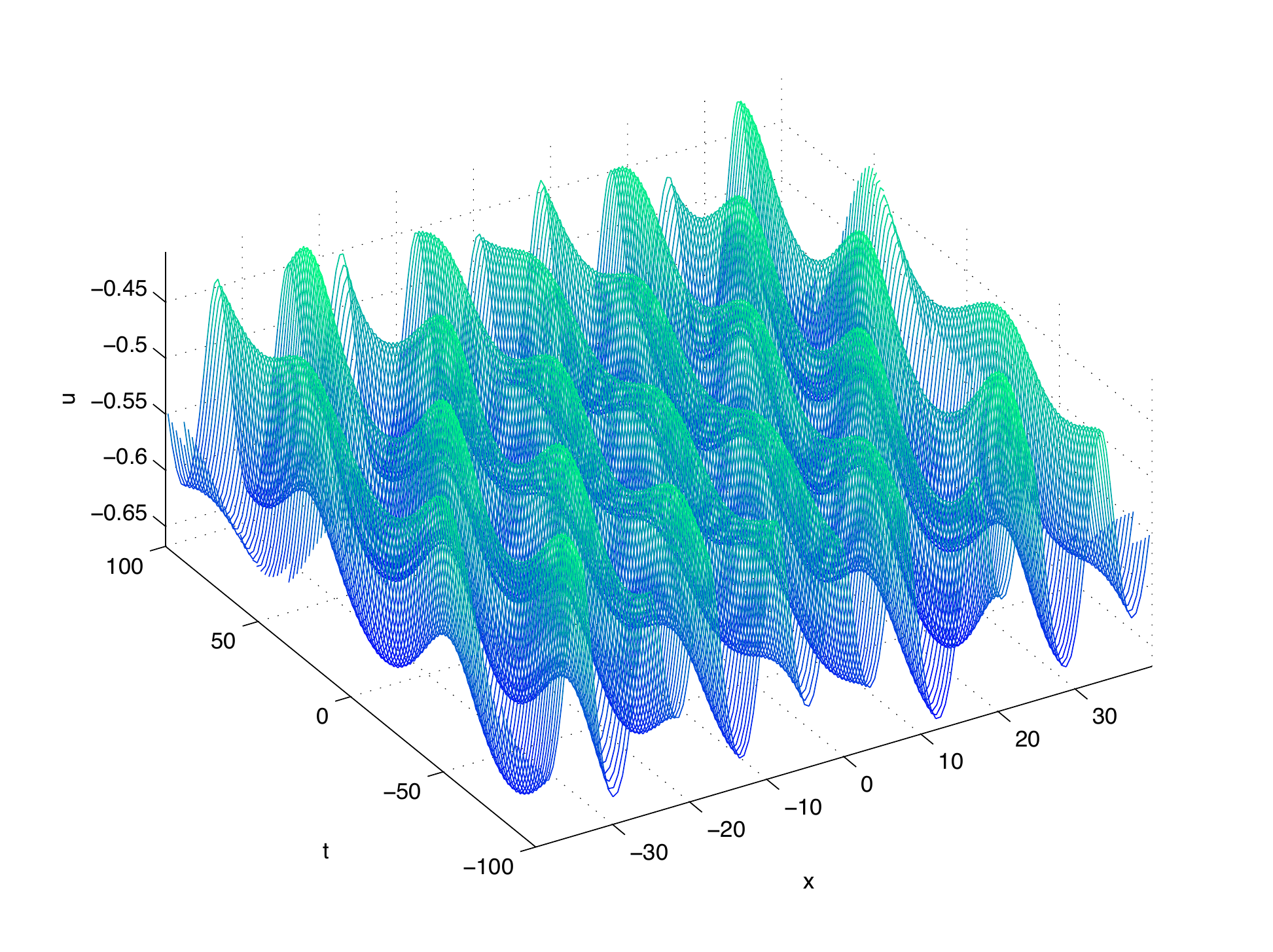}
\includegraphics[width=0.45\textwidth]{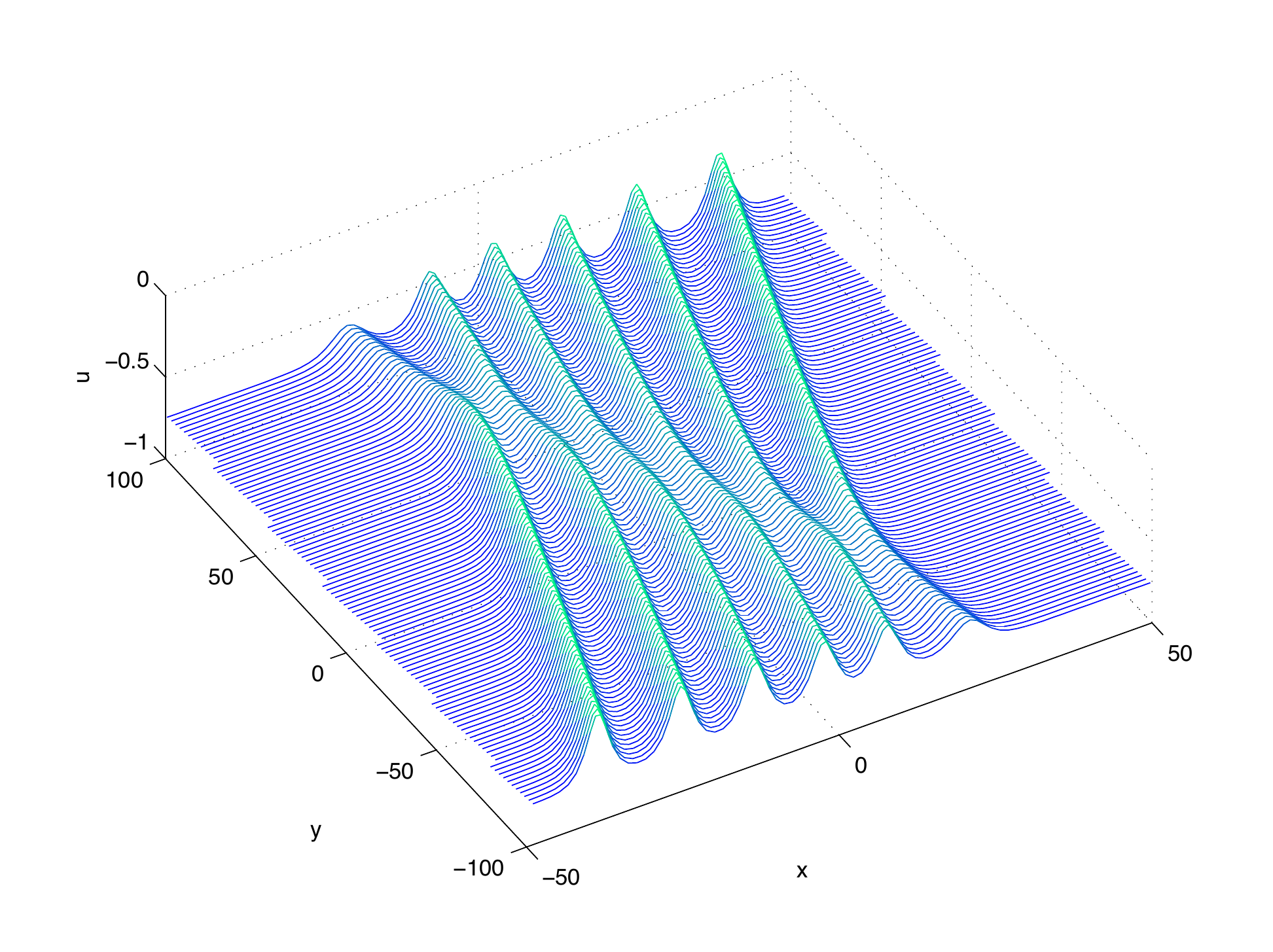}
\end{center}
 \caption{\textit{Solution (\ref{sol u CH}) to the CH equation  
 on a hyperelliptic curve of 
 genus 6 with branch points 
 $-7,-6,-5,-5+\epsilon,-3,-3+\epsilon,-1,-1+\epsilon,1,1+\epsilon,3,3+\epsilon,5,5+\epsilon$ and 
 $a=(-8)^{(1)}$, $b=(-8)^{(2)}$ and $e=(-7,0)$ for $\epsilon=1$ on the left and 
 $\epsilon=10^{-14}$, the almost solitonic limit, on the right.}}
   \label{figch6}
\end{figure}

The reality properties of the quantities entering the solution (\ref{sol u CH})
depend on the choice of the homology basis. For instance, in the 
homology basis of Fig.~\ref{cutsystem} and
for $a$ and $b$ stable under $\tau$, the Abel map 
$\mathbf{r}$ up to a vector proportional to $\mathrm{i}\pi$
and the vectors $\mathbf{V}_{b},\mathbf{V}_{e}$ are real, 
whereas these quantities are purely imaginary in the homology basis used in the previous sections.
Thus the easiest way 
to obtain cusped solutions is in this case to put $e=\lambda_{2}$ and 
to choose $\mathbf{d}$ corresponding to the characteristics $\frac{1}{2}
\left[\begin{smallmatrix}
   1 & \ldots & 1  \\
   1 & \ldots & 1
\end{smallmatrix}\right]^{t}
$. It can be easily checked that the theta functions 
$\Theta(\mathbf{Z}-\mathbf{d}\pm \mathbf{r}/2)$ cannot vanish since 
the argument is real, whereas the $\Theta(\mathbf{Z}-\mathbf{d})$ 
will have zeroes since the argument is complex. This implies that the 
derivative $u_{x}$ in (\ref{u_x}) diverges which corresponds to cups  
for the solution $u$. Peakons do not appear in such a limit of 
theta-functional solutions to CH and are thus not discussed here. To 
obtain them one would have to glue solutions in the solitonic limit 
on finite intervals to obtain a continuous solution that is piecewise 
$C^{1}$. 

We show the cusped solutions always in a 
comoving frame $x'=x+vt$ to allow  a better visualization of the 
solutions. In genus 2 we obtain cusped CH solutions of the form shown in 
Fig.~\ref{figch2cusp}, where also cusped solitons can be seen  in the 
degenerate situation. Obviously the collision between cuspons is 
analogous to soliton collisions.
\begin{figure}[htb!]
\begin{center}
\includegraphics[width=0.45\textwidth]{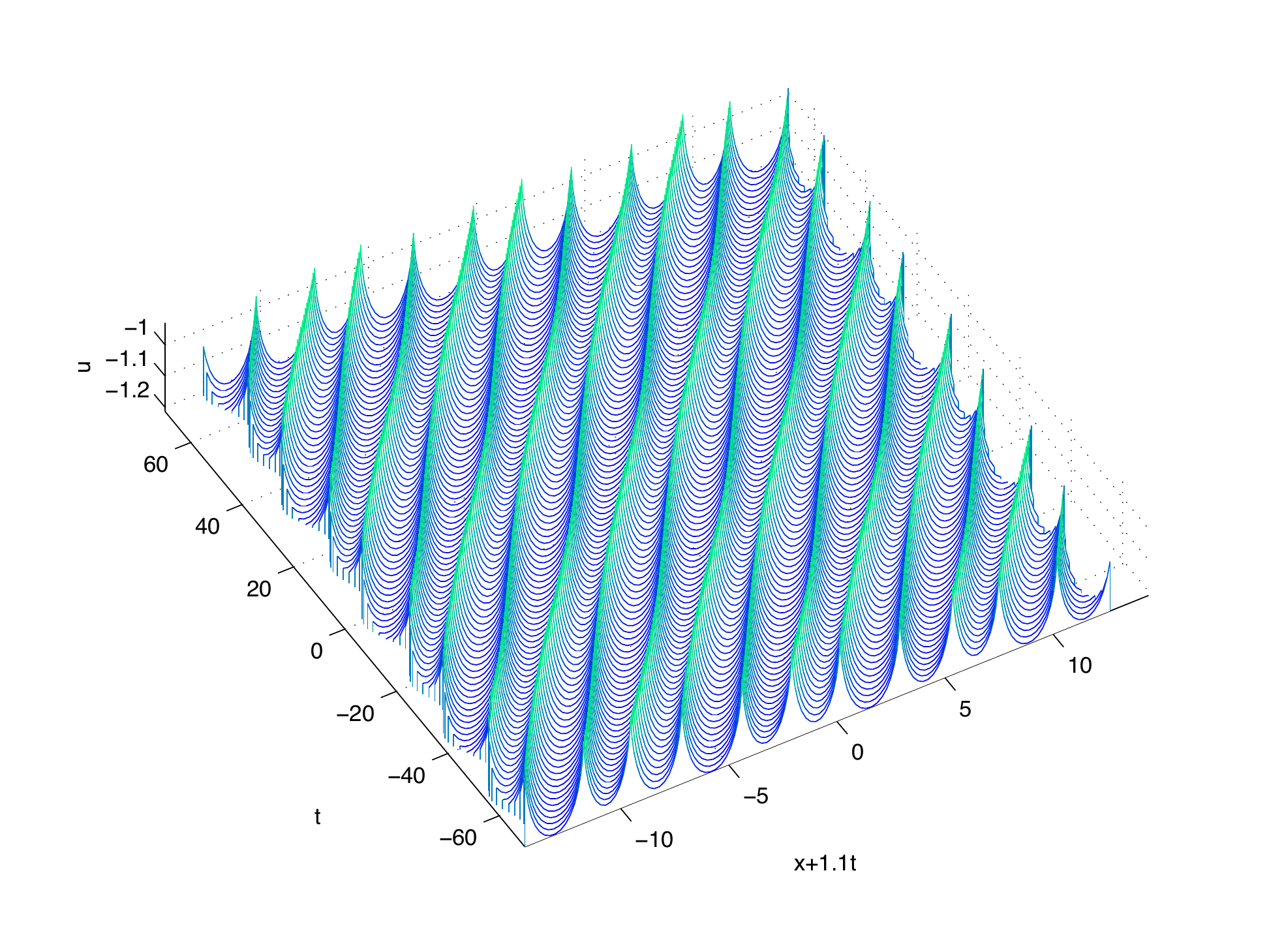}
\includegraphics[width=0.45\textwidth]{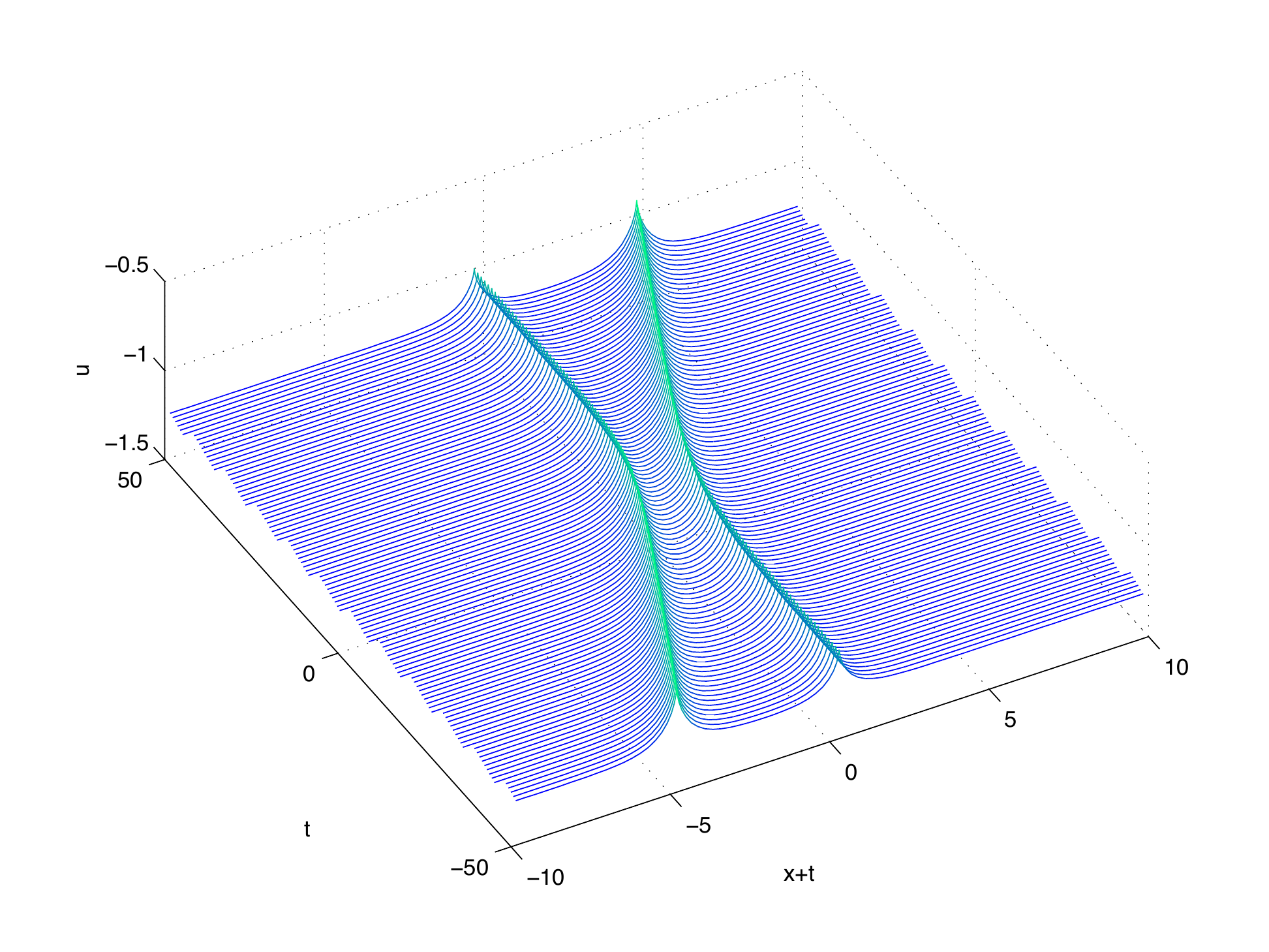}
\end{center}
 \caption{\textit{Cusped solution (\ref{sol u CH}) to the CH equation  
 on a hyperelliptic curve of 
 genus 2 with branch points $-3,-2,0,\epsilon,2,2+\epsilon$ and 
 $a=(-4)^{(1)}$, $b=(-4)^{(2)}$ and $e=(-2,0)$ for $\epsilon=1$ on the left and 
 $\epsilon=10^{-14}$, the almost solitonic limit, on the right.}}
   \label{figch2cusp}
\end{figure}

In genus 6 the CH solutions  have the form shown in 
Fig.~\ref{figch6cusp}. In the solitonic limit one can recognize a 6-cuspon. 
\begin{figure}[htb!]
\begin{center}
\includegraphics[width=0.45\textwidth]{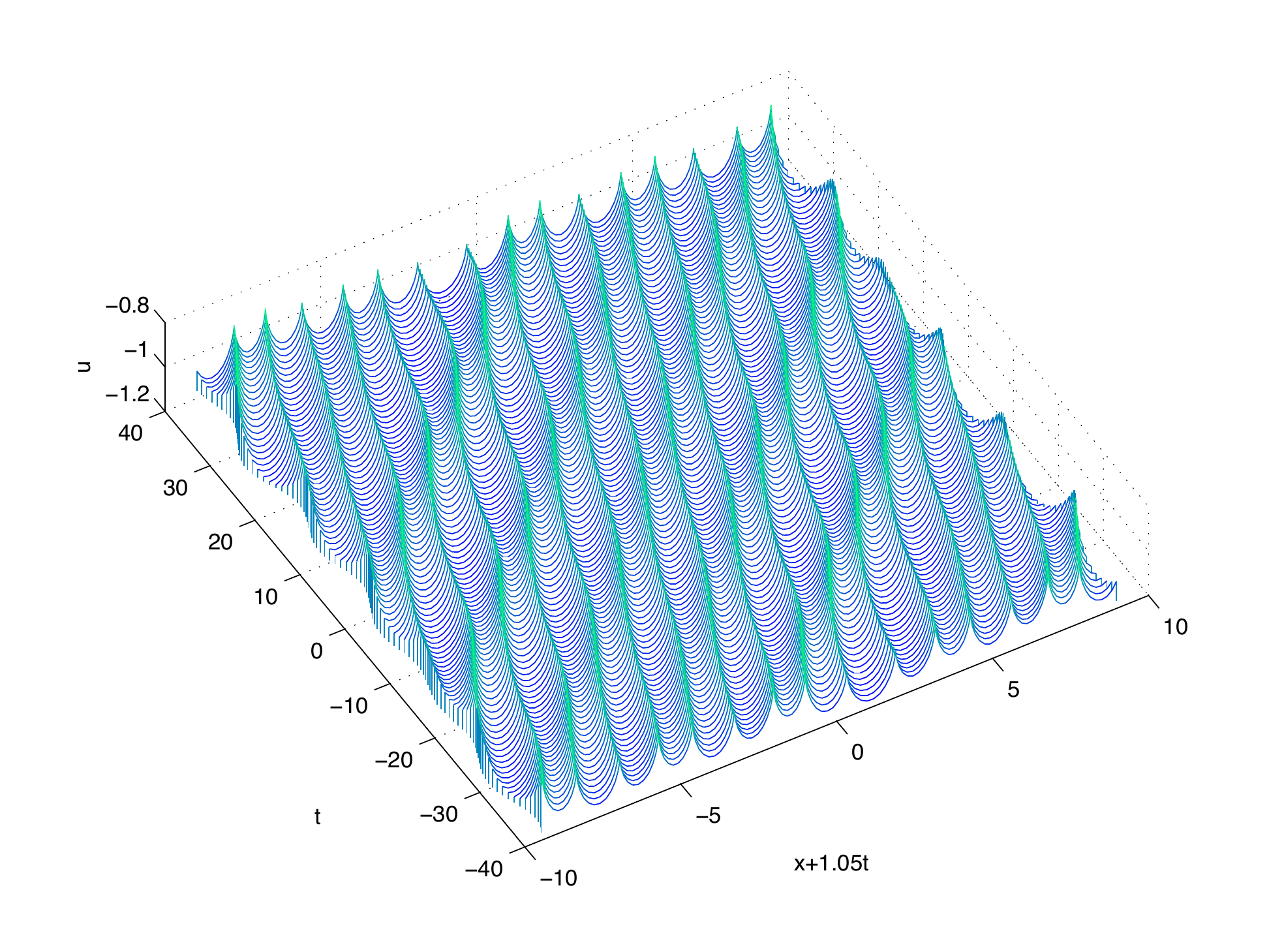}
\includegraphics[width=0.45\textwidth]{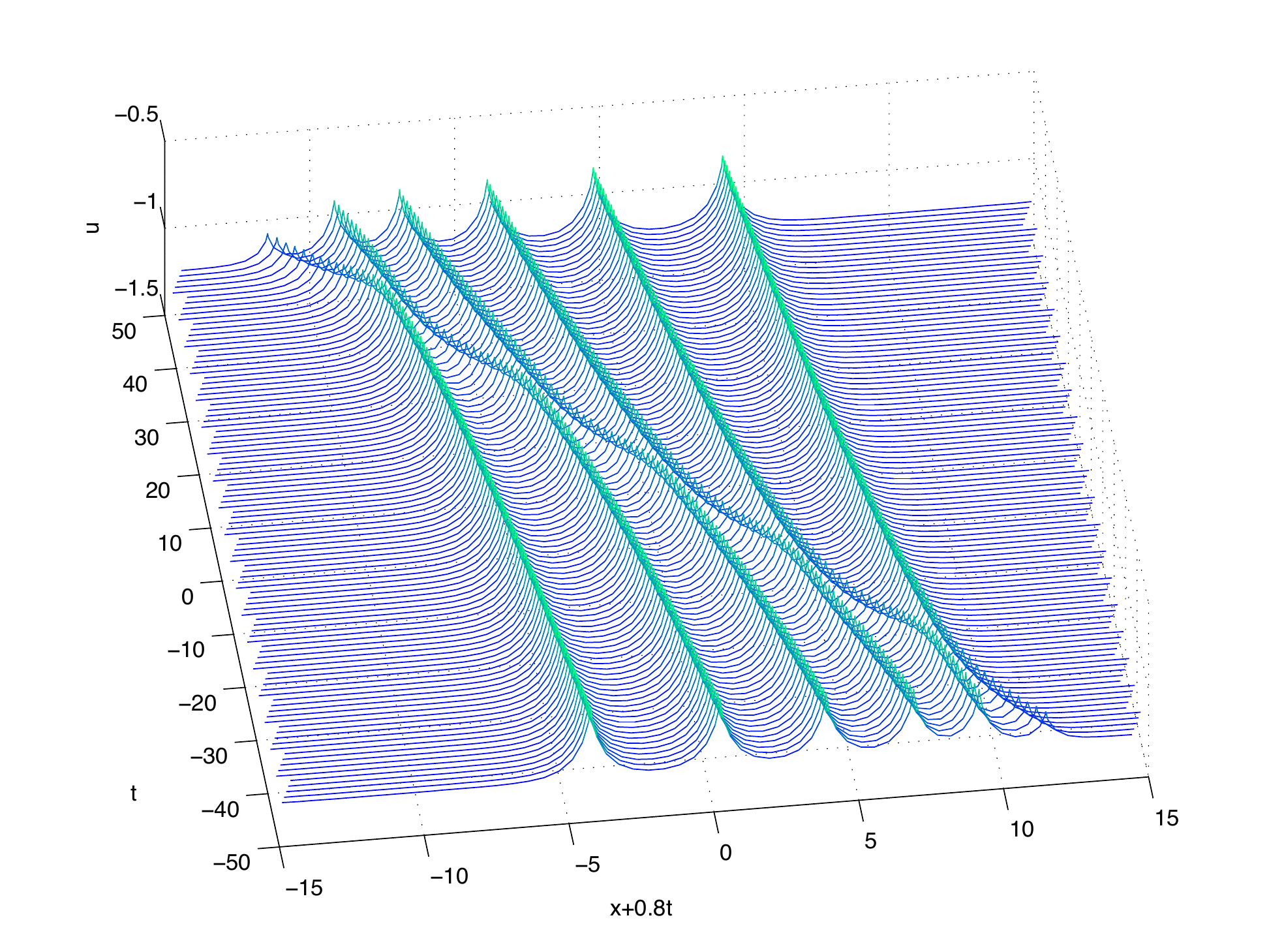}
\end{center}
 \caption{\textit{Solution (\ref{sol u CH}) to the CH equation  
 on a hyperelliptic curve of 
 genus 6 with branch points 
 $-7,-6,-5,-5+\epsilon,-3,-3+\epsilon,-1,-1+\epsilon,1,1+\epsilon,3,3+\epsilon,5,5+\epsilon$ and 
 $a=(-8)^{(1)}$, $b=(-8)^{(2)}$ and $e=(-6,0)$ for $\epsilon=1$ on the left and 
 $\epsilon=10^{-14}$, the almost solitonic limit, on the right.}}
   \label{figch6cusp}
\end{figure}

\section{Conclusion}
In this paper we have shown at the example of the CH equation
that Mumford's program to construct 
algebro-geometric solutions to integrable PDEs can be also applied to 
non-local (here in $x$) equations. For the studied case the solutions in terms of 
multi-dimensional theta functions do not depend directly on the physical 
coordinates $x $ and $t$, but via an implicit function. One 
consequence of this non-locality is the existence of non-smooth 
solitons. A numerical study of smooth and non-smooth solutions was 
presented.

A further 
example in this context would be 
the equation from the Dym-hierarchy for which theta-functional 
solutions were studied in \cite{AFY1}, which will be treated 
elsewhere with Mumford's approach. It is an interesting question 
whether the 2+1 dimensional generalization of the CH equation 
\cite{falqui,zang}, for which algebro-geometric solutions are so far 
unknown,  can be also treated with these 
methods.

\end{document}